\newtheorem{example}{Example}
\newtheorem{theorem}{Theorem}
\newtheorem{lemma}{Lemma}
\theoremstyle{remark}
\providecommand{\keywords}[1]
{
  \small	
  \textbf{\textit{Keywords:}} #1
}
\title{Nowcasting using regression on signatures}
\author[1,2]{Samuel N. Cohen}
\author[4]{Giulia Mantoan}
\author[2,3,5]{Lars Nesheim}
\author[3]{\'{A}ureo de Paula}
\author[4,6]{Arthur Turrell}
\author[1,2]{Lingyi Yang}
\affil[1]{University of Oxford}
\affil[2]{The Alan Turing Institute}
\affil[3]{University College London, Cemmap and Institute for Fiscal Studies}
\affil[4]{Bank of England}
\affil[5]{University of Bristol}
\affil[6]{No.10 Downing Street}
\date{\today}
\begin{document}

\maketitle
\onehalfspacing

\begin{abstract}
\footnotesize
\noindent We introduce a new method of nowcasting using regression on path signatures. Path signatures capture the geometric properties of sequential data. Because signatures embed observations in continuous time, they naturally handle mixed frequencies and missing data. We prove theoretically, and with simulations, that regression on signatures subsumes the linear Kalman filter and retains desirable consistency properties. Nowcasting with signatures is more robust to disruptions in data series than previous methods, making it useful in stressed times (for example, during COVID-19). This approach is performant in nowcasting US GDP growth, and in nowcasting UK unemployment.
\end{abstract}

\keywords{nowcasting, signatures}

\vspace{1em}
{\noindent \textit{\scriptsize{Disclaimer: 
Any views expressed are solely those of the author(s) and do not represent those of the Bank of England or its policy committees, or of No.10 Downing Street. This paper should therefore not be reported as representing the views of the Bank of England or UK government. We are grateful for seminar and conference participants at the 37th IARIW General Conference (2022), 2022, 2023, 2024 ESCOE Conference, the 2022 CFE-CMStatistics Conference, 2023 SIAM Conference on Financial Mathematics and Engineering, the 2023 Royal Economic Society Conference, the 2023 Financial Econometrics workshop, the 2023 Workshop on Nowcasting at Paris School of Economics, the 10th International Congress on Industrial and Applied Mathematics (2023), the 2024 Conference on Real-Time Data Analysis, Methods, and Applications, the 2024 Econdat Conference, the 2024 Society for Nonlinear Dynamics and Econometrics Conference, The 2024 British Applied Mathematics Colloquium, 2024 ECONDAT.
The authors gratefully acknowledge the ONS--Turing Strategic partnership for funding.
L.Y. was supported by EPSRC [EP/S026347/1] and the Hong Kong Innovation and Technology Commission (InnoHK Project CIMDA). SC was supported by the UKRI Prosperity Partnership Scheme (FAIR) under EPSRC Grant [EP/V056883/1].
For the purpose of open access, the authors have applied a CC BY public copyright licence to any Author Accepted Manuscript (AAM) version arising from this submission.
}}}

\clearpage

\section{Introduction}

Policymakers need up-to-date economic information to make decisions. However, the collection and compilation of the data underlying key economic indicators takes time. For example, the first estimate of UK monthly GDP is published approximately 6 weeks after the end of the month it pertains to.  
Alternative indicators that have a shorter publication lag, or that are released more frequently, offer the opportunity to glean information about the economy in close to real-time and so inform policy in a timely manner. For instance, the UK's Office for National Statistics publishes a range of weekly alternative indicators of economic activity covering card spending, retail footfall, and more.

Models that can harness these alternative indicators to produce estimates of \textit{current} economic activity -- whether it be gross domestic product (GDP), employment or other official statistics -- are known as nowcasting models, and their predictions are called nowcasts. As an example, a nowcasting model might combine a range of alternative indicators and output an estimate of the GDP figure for the current month, in contrast to a forecast, which looks ahead to activity which has not yet occurred.

The economic literature on nowcasting has grown dramatically in the past 20 years \citep[e.g. see][]{stock2002forecasting,GiannoneReichlinSmall2006,bok2018macroeconomic}, and has focused on three challenges that arise when attempting to use a large and varied set of contemporaneous indicators for real-time insights on the economy. The first is how we incorporate data into a model when there are missing observations from mixed or irregular sampling frequencies. The second is how we extend beyond linear time-invariant systems to allow for time-varying parameters or for non-linearity.
Third and finally, there is a challenge in how to incorporate large numbers of predictors into a single model, i.e. imposing structure to reduce the dimensionality of the problem.

During roughly the same time period, a set of tools to model non-linear multidimensional time series has been developed in the mathematics and machine learning literature (see Section~\ref{sec:sigs} for further details). These tools model time series as continuous time ``paths" and extract low dimensional representations of these paths known as ``signatures". It has been shown that path signatures perform well in prediction tasks \citep{graham2013sparse, morrill2020utilization}, suggesting that signatures can provide a tool to address the three nowcasting challenges.

Building on these developments, in this paper, we introduce a new nowcasting method, regression on signatures, that addresses all three of the nowcasting challenges in new and powerful ways: regression on signatures is compatible with standard dimension reduction tools, handles missing observations and mixed frequencies, and requires minimal additional effort when inferring non-linear relationships. Path signatures are compatible with dimension reduction tools because they are features extracted from data that can be treated like any other variables. They handle mixed frequencies and missing data because they effectively model time series in continuous time and capture the interactions between features in which the ordering of events is preserved \citep{gyurko2013extracting, levin2013learning}. Finally, signatures are able to infer non-linear relationships because they are non-linear features with universal representation properties \citep{lyons2007differential} and, as with state-space models, signatures can be used in non-stationary settings.

An additional key advantage of regression on signatures is the transparency of its estimation, particularly when compared to other next-generation nowcasting approaches that leverage deep learning. While machine learning can yield strong predictive performance, their underlying models are often high-dimensional and non-linear, making it difficult to interpret estimated relationships or assess robustness. Recent advances (e.g. SHAP values \citep{lundberg2017unified}, counterfactual methods \citep{black2021consistent}) have improved our ability to explain complex models, but these remain secondary layers of analysis rather than part of the estimation itself and can lack the transparency and simplicity of regression. In contrast, signature-based models rely on linear regression, for which the estimation procedures, sampling properties, and interpretation of coefficients are well-understood \footnote{The signature terms themselves may be abstract, they do have a geometric interpretation (see Section~\ref{sec:sigs})}. The linear structure of regression on signatures ensures that estimation remains transparent, stable, and amenable to standard inference techniques.

Using the path signature for nowcasting is very simple. First, we take available observations and computes their signature (the data points are treated as discrete observations of continuous paths). This can be done using any one of several free open source Python libraries that compute signatures, for example, \texttt{esig}, \texttt{iisignature} \citep{iisignature}, and \texttt{roughPy} \citep{morley2024roughpy}. Second, the computed signatures are used as regressors to predict the target variable of interest. This will be referred to as the \textit{signature method} in the rest of the paper. The signature method can be combined with standard dimension reduction methods prior to the first step by e.g. extracting principal components before computing the signatures.

The introduction of the signature method for nowcasting is the primary contribution of this paper. This is an entirely new approach to nowcasting and has the aforementioned advantages relative to existing methods. Additionally, we show that the signature method provides superior performance on a standard problem in the literature, nowcasting US GDP, when compared to a direct application of Dynamic Factor Models (including the New York Federal Reserve's Dynamic Factor Model described by \citet{bok2018macroeconomic}). We also show that the signature method performs strongly when nowcasting UK unemployment changes, despite the challenging context of irregular information flows, multiple delays in the publication of the input and target variables due to external events (such as the COVID-19 pandemic), and a range of observation frequencies.

Our second contribution is to prove mathematically that the Kalman--Bucy filter can be written as a linear function of signatures. Put another way, signatures generalise linear Kalman filters, and recover the Kalman filter as a special case. In addition to our theoretical proof, we provide simulations that confirm estimates from the signature method do indeed match those of the Kalman filter.

This is a remarkable result: regression on signatures can capture any state-space model, even those with non-linear relationships, in any number of dimensions. Furthermore, using signatures as features eliminates some of the traditional complexity of time series modelling: choices over transformations and feature combinations matter far less because signatures provide ready-made features for generic input series. Because of these properties, this simple approach is applicable to a wide range of nowcasting tasks. 
The price for this flexibility is that the signature method can be high-dimensional, resulting in a need for substantial quantities of data or the use of dimension-reduction tools to prevent overfitting.

Our third contribution is to show how to combine signatures with standard dimension reduction techniques, like principal component analysis and dynamic factor models, so that the approach we introduce can be used in as wide a variety of settings as possible and can make use of large numbers of indicators effectively.

This paper contributes to the more recent literature on new methodologies to nowcast macroeconomic variables, including \citet{jokubaitis2021sparse}, \citet{proietti2021nowcasting}, \citet{soybilgen2021nowcasting}, \citet{aastveit2018combined}, \citet{ankargren2021simulation}, \citet{babii2022machine}, and on modeling missing data points and dimension reduction, such as \citet{pellegrino2020selecting}, \citet{smeekes2021automated}, \citet{buccheri2021score}, and \citet{chan2023high} in a more general context than nowcasting.

\section{Background on challenges in nowcasting} \label{sec:nowcasting_challenges}

In this section, we discuss commonly occurring challenges for nowcasting in economics in more detail. We will address these challenges using the signature method in later sections.

\subsection{Missing observations and mixed frequency or irregular sampling} \label{subsec:irregular}
Due to varying publication lags, delays, and interruptions, the time series for some variables entering a nowcasting model often have missing observations. When these missing observations affect the end of the observation window, this is often referred to as the ``ragged-edge'' problem \citep{wallis1986forecasting}. The challenge is how to adapt standard discrete-time time-series methods to condition on arbitrary patterns of missing information. This can rapidly escalate into a complex programming challenge. Similarly, we would like to use all possible high-frequency variables (for example daily, weekly, or monthly data on prices, expenditure, or revenue) to predict low-frequency variables (such as monthly or quarterly GDP). This is the mixed frequency data problem. Again the challenge is how to combine these complex mixed frequency data.

Several solutions to these problems have been developed in the macroeconometrics literature. One option for the missing data problem is to use the classic discrete time Kalman filter \citep[e.g.][]{kalman1960new,stock2002forecasting,banbura2014maximum}. This approach assumes a linear time series structure and imputes missing values using the Kalman filter. We discuss this option in more detail in Section \ref{sec:background}.

A second option to address both problems is to combine the bridge method \citep{schumacher2016comparison} with imputation. The bridge method addresses the mixed frequency problem using a two-equation system, first aggregating the set of high-frequency predictors to produce low-frequency predictors and then using the low-frequency predictors to predict the low-frequency target. We illustrate how bridge methods work via an example. Suppose we wish to use a single monthly variable $z_t$ to predict a quarterly target $y_t$. One bridge model for this problem is
\begin{align}\label{eq:bridge}
    y_t&=\sum_{s=1}^{p} \alpha_s y_{t-3s} + \beta_0+\beta_1 x_t+\varepsilon_t \\ 
    x_t & = \sum_{s=0}^2 \gamma_s z_{t-s}.
    \label{eq:bridge2}
\end{align}
Equation (\ref{eq:bridge}) is an autoregressive equation used to predict $y_t$, where \(p\) indicates the lag in the autoregressive process of the quarterly target and \(\epsilon_t\) is an exogenous noise process. The parameters in Equation (\ref{eq:bridge}) are typically estimated from the data. Equation (\ref{eq:bridge2}) is the ``bridge" equation that aggregates the high-frequency variable to the same frequency as the low-frequency variable. The parameters in the bridge equation are typically chosen \textit{ex ante}. For example, if $x_t$ is the quarterly average of $z_t$, then $\gamma_s = \frac{1}{3}$ for all $s$. 

The bridge equation addresses the mixed frequency problem. To overcome the ragged edges, missing values of $z_t$ are imputed using an auxiliary forecasting model such as an AR(p) process
\begin{equation*}
     z_{t} = \delta_{0} + \sum_{s=1}^p \delta_s z_{t-s} + \eta_{t},
\end{equation*}
where \(\eta_t\) is a zero mean white-noise process, \(\delta_0\) and \(\delta_s\) are parameters to be fitted.

For simple patterns of mixed frequency data (e.g. monthly predictors and quarterly targets), the bridge method is easy to use and does not require users to specify in detail the high frequency dynamics of the system. However, the aggregation method employed is ad hoc and likely needs to be adapted for different contexts. Moreover, adapting the method to more complicated patterns of mixed frequency data (e.g. combining weekly, monthly, or irregularly sampled data) can be challenging.

A third approach directly relates high-frequency indicators to a low-frequency target: this is the mixed data sampling (MIDAS) model \citep{ghysels2005there}. MIDAS accommodates data sampled at different frequencies by incorporating higher frequency variables using distributed lag terms with fractional lags. Mathematically, it can be written in a form that is very similar to the bridge equation; the distinction is that MIDAS employs a more general functional form than Equation (\ref{eq:bridge2}) and its parameters are estimated rather than specified \textit{ex ante}. An example of the MIDAS equation for a target $y_{t}$ is:

\begin{equation}\label{eq:midas}
y_t = \sum_{i=1}^{k_{lo}} a_i x^{\text{lo}}_{i,t} + \sum_{j=1}^{k_{hi}} b_j \sum_{k=0}^{m} w_j(k; \theta_j)\cdot x^{\text{hi}}_{j,t-k} + \varepsilon_t,
\end{equation}
where $y_t$ is the dependent variable, $x^{\text{lo}}_{i,t}$ are low-frequency predictors with coefficients $a_i$, $x^{\text{hi}}_{j,t-k}$ are high-frequency predictors with coefficients $b_j$, $w_j(k; \theta_j)$ are exponential Almon weights \citep{almon1965distributed}, and $\varepsilon_t$ is the error term. The exponential Almon lag weight function is defined by
$$
w_j(k; \theta_j) = \frac{\exp(1 + s_k(\theta_{j1} + \theta_{j2}s_k))}{\sum_{k=0}^m \exp(1 + s_k(\theta_{j1} + \theta_{j2}s_k))},
$$
where $s_k = k/m$ is the normalised lag position, $\theta_{j1},\: \theta_{j2}$ are the weight parameters for predictor $j$, and $m$ is the maximum lag length. The exponential Almon lag avoids parameter proliferation when $k_{hi}$ in Equation (\ref{eq:midas}) is large.

The objective function used to estimate the coefficients in the MIDAS model is 
$$
\min_{{a_i, b_j, \theta_j}} \frac{1}{2}\sum_{t=1}^T (y_t - \hat{y}_t)^2,
$$
where $\hat{y}_t$ is the predicted value and the optimisation is subject to the constraints $\theta_{j1}, \theta_{j2} \geq 0 \:\forall j$. Non-linear optimisation solvers such as BGFS \citep{broyden1970convergence, fletcher1970new, goldfarb1970family, shanno1970conditioning} and L-BFGS-B \citep{byrd1995limited} can be used to find the coefficients.

To date, most empirical nowcasting studies using the methods above only deal with regularly sampled data with missing observations. This is because nowcasting typically uses official published data that are released on a regular schedule. However, in recent years there has been an increase in the use of alternative data sources, e.g. web-scraped data and supermarket scanner data. These alternative sources can be high-frequency, and have complicated or irregular missingness patterns. Furthermore, statistical offices have recently struggled with issues such as COVID-19 and declining survey response rates that have, at times, caused significant delays in publishing both input and target variables relevant to economic nowcasting. 

Therefore, there is a need for a nowcasting methodology that can handle mixed frequency of arbitrary frequency, complex irregularly sampled data, and data with arbitrary patterns of missing values. Signature methods can handle all of these cases. We discuss these points in the sections below and illustrate our results with simulations and empirical applications.

\subsection{Time-varying parameters and non-linearities} \label{subsec:nonlinear}

Classic time series methods, such as the autoregressive integrated moving average (ARIMA) model, assume that the dynamics of a low-dimensional variable, after suitable differencing to ensure stationarity, can be modelled by a linear model with autoregressive and moving average components. That is, after differencing, the data are stationary, the model is linear, and the parameters of the model are assumed to be constant. In nowcasting settings, all three of these assumptions can be restrictive -- for example, when there are large shocks to the economy, which is also when it is critical to have an accurate guide to what is happening.

State-space models offer one approach to relax these restrictions \citep{durbin2012time}. While the linear Kalman filter is the best known, state-space models in general are highly flexible and can allow for non-linearity and time-varying parameters. 
A common difficulty in using these methods is that the state-space representation needs to be specified, and cannot easily be learned from historical data.
Many papers, including \cite{hamilton1994state}, \cite{nielsen2014estimation}, \cite{carter1994gibbs}, \cite{kim1999state}, \cite{kim1994dynamic}, also allow for time-varying parameters.

\subsection{Dimension reduction} \label{subsec:dimension}

In most nowcasting exercises, a large number of indicators are used to predict a single target.
For example, the Federal Reserve Bank of New York staff nowcasting model uses 37 variables to nowcast US GDP \citep{FRBNY}. The need to extract useful information from high-dimensional time series data has only increased as new indicators have become available. For instance, \citet{mccracken2016fred} describes the St Louis Federal Reserve Monthly Database for Macroeconomic research, which contains 134 monthly US indicators.

The macroeconomic nowcasting literature has tackled dimensionality issues in a variety of ways. The leading approaches include dynamic factor models (DFMs), Bayesian vector auto-regressions (BVARs), and penalised estimation methods like the LASSO.\footnote{Recent macroeconomics papers have also investigated the use of other machine learning methods including random forests and artificial neural networks \citep{richardson2021nowcasting}.} 

Dynamic factor models were introduced into economics by \citet{Geweke1977} and are reviewed in \citet{StockWatson2010} and \citet{BaiNg2008}. These models are based on the idea that most of the time series variation in a large set of economic indicators is driven by the dynamics of a small number of unobserved common factors. These factors can be estimated from the complete set of economic variables using singular value decomposition of the data matrix. The dynamics of the factors are analysed using standard time series methods such as vector autoregression. There has been rapid development of DFM methods in applied macroeconomic analysis, and they have been used extensively in economic forecasting and nowcasting \citep{GiannoneReichlinSmall2006,doz2006,bok2018macroeconomic}.

Bayesian vector auto-regressions (BVARs), developed for macroeconomics by \citet{litterman1979techniques} and \citet{sims1980macroeconomics} (see \citet{karlsson2013forecasting} for a review), regularise and limit the dimensionality of models by first specifying informative prior beliefs on the parameter space and then using Bayesian methods to estimate the posterior distribution of the parameters. The posterior distribution is then used to nowcast the economic variables of interest. These methods are generally restricted to linear models, and so depend on careful choice of parameterisation.
They also require care to ensure that the priors capture relevant information -- for example momentum effects and non-stationarity typically need to be explicitly included.

Penalised estimation methods limit the dimensionality of models by including a penalty term in the estimation objective function. For example, the LASSO estimator, popularised in statistics by \citet{tibshirani1996regression}, adds a penalty based on the sum of the absolute values of the parameters (known as the L1 penalty). This shrinks parameter estimates towards zero and, due to the non-smooth penalty function, selects only a subset of variables to have non-zero coefficients. The ridge estimator \citep{hoerl1970ridge} instead adds a penalty based on the sum of squares of the parameters (known as the L2 penalty). This also has the effect of shrinking parameter estimates towards zero. The Elastic Net \citep{zou2005regularization} combines the L1 and L2 penalties. Many other penalty function are possible. Penalised estimation allows researchers to include very large numbers of predictors in nowcasting models and, in many settings, improves the results of nowcasts by reducing over-fitting. These methods have been used in nowcasting in \citet{babii2021machine}, for example.

In Section \ref{sec:NYfedapplication}, we show how to combine dynamic factor models with a signature method to produce nowcasts for US GDP growth. Unfortunately, we cannot directly compete with the NY Fed Nowcast given the restrictions over set of variables publicly available. However, on the same dataset, our DFM improves \citep{bok2018macroeconomic} nowcast's accuracy by 20\%.  
\section{Theory}\label{sec:background}

In this section, we outline the underlying mathematical theory for nowcasting with state-space models and illustrate their connection to the path signature. We start with a recap of the discrete time Kalman filter and its continuous time extension, then introduce the theory of signatures, and finally show how regression on signatures subsumes the Kalman filter. We introduce here notation that we use throughout the paper.

\subsection{Discrete time state-space models}

First, we recap the classic discrete-time Kalman filter (see \citet{bertsekas2012dynamic} for an overview).

Suppose that we have a hidden process $Y \in \mathbb{R}^d$ that we cannot directly measure, but would like to infer from an observed process $X \in \mathbb{R}^m$.
In the context of nowcasting, $Y_t$ would be e.g. current GDP (which is not observed due to publication lags), and $X_t$ is a vector of observable variables at time $t$.
At each time $t$, the Kalman filter is deployed in two stages. The first is the \textit{prediction} stage: given a previous estimate of $Y_{t-1}$, predict $Y_t$. The second is the \textit{correction} stage: as more information about $X_t$ arrives, update the estimate of $Y_t$.

To be precise, assume that the ground truth is given by the linear equations:
$$Y_t = AY_{t-1} +  W_t,\qquad X_t = CY_t + V_t,$$
with initial condition $Y_0 \sim N(\mu_{0|0}, P_{0|0})$.
Here $W$, $V$ are independent white noise processes in $\mathbb{R}^{d}$ and $\mathbb{R}^m$ respectively, with $W_t\sim N(0, \Gamma)$ and $V_t \sim N(0, \Sigma)$, mutually independent at all time points. Note $A$ and $\Gamma $ are in $\mathbb{R}^{d\times d}$, $C\in \mathbb{R}^{m\times d}$, and $\Sigma\in \mathbb{R}^{m\times m}$. Assume all parameters are known and write \(\mathcal{X}_t = (X_1, \dots, X_t)\) to describe observations up to time \(t\).

For the prediction stage, $Y_t|\mathcal{X}_s$, $t \ge s$, is normally distributed, so we can write $Y_t|\mathcal{X}_s \sim N(\mu_{t|s}, P_{t|s})$.
Using the dynamics of $Y$ and $X$, we obtain the prediction equations:
\begin{equation*}
    \begin{split}
    \mu_{t|t-1} &\equiv  E[Y_t|\mathcal{X}_{t-1}] =A \mu_{t-1|t-1},\\
    P_{t|t-1} & \equiv \mathrm{var}(Y_t|\mathcal{X}_{t-1}) = A P_{t-1|t-1} A^\top +  \Gamma.
    \end{split}
\end{equation*}

Given new information, 
the updated mean estimate is
\begin{equation}
\label{eq:DiscreteKalmanMean}
\mu_{t|t} \equiv E[Y_t|\mathcal{X}_{t}] = E[Y_t|X_t,\mathcal{X}_{t-1}]= \mu_{t|t-1} + (P_{t|t-1} C^\top S_{t}^{-1}) (X_t - C \mu_{t|t-1}),
\end{equation}
while the variance correction equation is given by the discrete-time Riccati equation
\begin{align}
\label{eq:DiscreteKalmanVariance}
P_{t|t} \equiv \mathrm{var}(Y_t|X_t, \mathcal{X}_{t-1})&= (I- (P_{t|t-1} C^\top S_{t}^{-1}) C)  P_{t|t-1}. 
\end{align}
This method generalises to the case where the parameters are time-varying.

\subsection{Continuous time state-space models}\label{sec:cts_kf}

We wish to work with irregular and mixed frequencies. In order to treat timings of observations that are variable, it is natural to embed the discrete-time model in a continuous-time framework. As pointed out in \citet{harvey1990forecasting}, p.479, \textit{``[a]lthough missing observations can be handled by a discrete time model, irregularly spaced observations cannot. Formulating the model in continuous time provides the solution. Furthermore, even if the observations are at regular intervals, a continuous time model has the attraction of not being tied to the time
interval at which the observations happen to be made.''} (Also see \citet{bergstrom1984hoe}.) The most natural continuous-time extension of the Kalman state-space model is the Kalman--Bucy filter.\footnote{For a detailed study of these equations, and derivation of the filter, see for example \citet{bain2008fundamentals} or \citet[Chapter 22]{cohen2015stochastic}.} As in the discrete time setting, assume a hidden process $Y$ (this is the economic variable we are interested in, e.g. current GDP) and observed process $X$. In this case, assume the ground truth is of the form
\begin{align*}
    dY_t &= (FY_t+f)dt + \sigma dV_t,\\
    dX_t &= (H Y_t + h)dt + \tilde{\sigma} dW_t,
\end{align*}
where $F\in \mathbb{R}^{d\times d},\: \sigma \in \mathbb{R}^{d\times p},\: f\in \mathbb{R}^d,\: H\in \mathbb{R}^{m\times d},\: h\in \mathbb{R}^m,\: \tilde\sigma \in \mathbb{R}^{m\times \tilde p}$. Hence, $Y$ is $d$-dimensional and $X$ is $m$-dimensional. (We can also allow the coefficients to be deterministic functions of time, with minimal changes.) The processes $V$ and $W$ are Brownian motions and (for notational simplicity) we assume $W$ is independent of $V$. By computing the Moore--Penrose pseudoinverse $\tilde{\sigma}^\dagger$, and hence replacing $X$ with $\tilde{\sigma}^\dagger X$ (and modifying $H$ and $h$ accordingly), we can assume without loss of generality that $\tilde{\sigma}$ is an identity matrix, for ease of presentation.

The filter, which estimates the current state of the underlying process $Y$, is then characterised by the following pair of stochastic differential equations
\begin{align}
    d\hat Y_t &= (F \hat Y_t +f)dt + R_tH^\top (dX_t -(H\hat Y_t+h)dt),\label{eqn:opt_filter}\\
    \frac{dR_t}{dt} &= \sigma \sigma^\top+F R_t + R_t F^\top - R_t H^\top H R_t. \label{eqn:filter_var}
\end{align}
Here, \(\hat{Y}_t = \mathbb{E}[Y_t \mid \mathcal{F}^X_t]\) denotes the optimal estimate (in the mean square sense) of the unobserved state \(Y_t\), conditioned on the information available from the observed process \(X\) up to time \(t\), that is, \(\mathcal{F}^X_t = \sigma(X_s : s \leq t)\).
The matrix \(R_t \in \mathbb{R}^{d \times d}\) is the conditional covariance of the estimation error
\[
R_t = \mathbb{E}[(Y_t - \hat{Y}_t)(Y_t - \hat{Y}_t)^\top \mid \mathcal{F}^X_t].
\]
Note that \(R_t\) evolves deterministically according to the Riccati Equation \eqref{eqn:filter_var}, it does not depend on the realisation of the observed process \(X\), but only on the system parameters and time. Once \(R_t\) is solved, it can be used in Equation \eqref{eqn:opt_filter} to obtain the filter estimate \(\hat{Y}_t\) from the observed data. The pair \((\hat{Y}_t, R_t)\) fully characterises the conditional distribution of \(Y_t\) given the observations: \(Y_t \mid \mathcal{F}^X_t \sim \mathcal{N}(\hat{Y}_t, R_t)\). In essence, up to changes of notation, the continuous-time mean equation \eqref{eqn:opt_filter} is completely analogous to the discrete-time mean equation \eqref{eq:DiscreteKalmanMean}, and similarly for the variance equations \eqref{eqn:filter_var} and \eqref{eq:DiscreteKalmanVariance}. Unsurprisingly, the discrete-time equations naturally arise as Euler--Maruyama discretizations of the continuous dynamics on a regular grid. However, a key advantage of the continuous-time perspective is that irregular observations can also be considered, leading to different discretisations.

It is worth noting some characteristics of the (discrete or continuous) Kalman filter. First, the various parameters are assumed known -- typically in practice these must be learned from data.  (This can be done, for example, using the EM algorithm \citep{dempster1977maximum}.) More significantly, the system is assumed to follow linear dynamics. This can be relaxed using variants such as the Extended Kalman Filter, which assumes the processes are non-linear stochastic dynamic processes \citep{jazwinski1970stochastic}. 

\subsection{Paths and signatures}\label{sec:sigs}

In this section, we introduce terminology and definitions related to paths and signatures. For a more detailed introduction see \citet{chevyrev_primer_2025, fermanian2021embedding, lyons2007differential}.

The key insight of the signature method we will propose is that we can treat a wide variety of nowcasting problems, with irregular and mixed frequency observations, by widening our view to consider our observations as `paths' in time and space. By thinking of paths, rather than discretely observed time-series, as our fundamental object, we obtain a unified approach, and open up the use of a variety of approximations, based on the mathematics of paths.

Formally, a path in \(\mathbb{R}^d\) is defined as a continuous function \(X: [a,b] \to \mathbb{R}^d\), where each component is a one-dimensional path \(X^{k}: [a, b] \to \mathbb{R}\), $k \in \{1, \dots, d\}$.  For a given sequence of observations, there are a variety of ways we can think of recovering a `path' from these observations. Suppose the $j$th component is observed at times $\{t_{i, j}\}_{i\in\mathcal{I}_j}$, for some countable index $\mathcal{I}_j$. A particularly elegant path can be constructed by first considering the map $t\mapsto (t, \{X^j_{\lfloor t\rfloor_j}\}_{j})$, where $\lfloor t\rfloor_j = \max_i\{t_{i,j}:t_{i,j}\leq t\}$ gives the time of the most recent observation of the component $X^j$. This function is not continuous, but has the simple interpretation that it represents the `most recent observations' (or equivalently, is the result of forward filling\footnote{As the goal of this construction is the path, the choice of forward filling does not imply that we believe unobserved values are well approximated by constants, but simply acts as a representation of our observation values.} our observations). We can then extend this function by connecting the jumps in observations, to obtain the `rectilinear path interpolation'. This is illustrated in Figure \ref{fig:rectilinear}.

\begin{figure}
    \centering
    \caption{Rectilinear interpolation. Here real observations (red) are first forward filled, resulting in a piecewise constant function. In order to obtain a continuous path (without changing the flow of information), the values immediately before (blue) and after (red) a new observation are connected vertically using `virtual time'. The resulting rectilinear interpolated path has a longer length than the original observations, but describes the available information precisely in a consistent manner for all sampling regimes.}
    \includegraphics[width=0.7\linewidth]{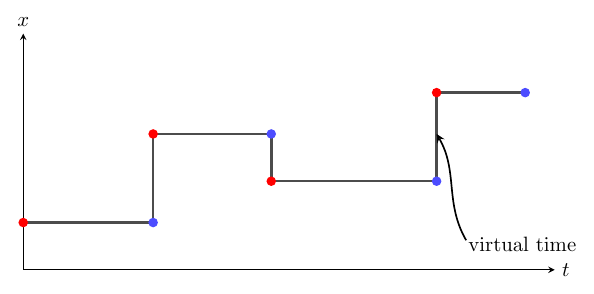}
    \label{fig:rectilinear}
\end{figure}

This construction of a path is somewhat artificial, but demonstrates that any sequence of discrete observations can be represented by a path, without changing the flow of information through time. Importantly, this allows us to represent all observation sequences (at mixed frequencies and with irregular observations) in the same space, as all of them can be considered to be paths. In particular, we do not have to assume that our observations should be sampled at high-frequency, as the `observation path' is a well-defined object even with infrequent observations\footnote{One might ask whether there is a sense that, when observations become increasingly frequent, the path should converge to a continuous observation path. This is indeed the case, as we discuss in Appendix~\ref{app:consistency}. However, in many problems observation frequencies are not within our control, and so we will think of the path constructed from discrete observations as the primary object, rather than as an approximation of an unobtainable ideal path.}.

The difficulty of working with paths is that it is not immediately apparent how to use these as statistical objects. The path signature is a mathematical object that describes time-series-like data and is closely related to the theory of rough paths (see \citet{friz2020course}). It is a property of continuous paths and has been shown to be an effective description of intertemporal information in prediction tasks -- for example in Chinese handwriting recognition \citep{graham2013sparse}, sepsis detection \citep{morrill2020utilization}, malware detection \citep{cochrane2021sk-tree}, and time series generation \citep{ni2021sig-wasserstein}. Intuitively, the signature allows us to describe `polynomial' functions of paths, in a mathematically rigorous way.

The signature is defined through an iterative construction. Let
    \begin{equation*}
        S^1(X^k)_{a,t} \equiv \int_{a}^t dX_s^{k} = X_t^{k} - X_a^{k}, \quad k \in \{1, \dots, d\},
    \end{equation*}
     and let the double iterated integral be
    \begin{equation*}
        S^2(X^{k}X^l)_{a,t} \equiv \int_a^t S^1(X^k)_{a,s} \: dX_s^{l} = \int_a^t\int_a^s dX_r^{k}dX_s^{l}, \quad k, \: l \in \{1, \dots, d\}.
    \end{equation*}
More generally define the n-fold iterated integral by
\begin{equation*}
    S^n(X^{k_1}X^{k_2}\dots X^{k_{n-1}} X^{k_n})_{a,t} \equiv \int_a^t S^{n-1}(X^{k_1}X^{k_2}\dots X^{k_{n-1}})_{a,s} \: dX_s^{k_n}, \quad k_i \in \{1, \dots, d\}.
\end{equation*}
    
The superscript in \(S^2(X^{k}X^l)_{a,t}\) denotes the number of iterated integrals, the subscript indicates the limits of the outermost integral, and the terms inside the brackets give the order of integration, starting with the innermost integral. The number of iterated integrals taken is the ``level'' of the signature. The signature of the path is the ordered infinite collection of all such iterated integrals
\begin{equation*}
    S(X)_{a,b} \equiv  \Big(1, S^1(X^1)_{a,b}, \dots,S^1(X^d)_{a,b}, S^2(X^1X^1)_{a,b}, S^2(X^1X^2)_{a,b}, \dots \Big).
\end{equation*}

The path signature captures geometric information. The first level signature terms give the change/increment in each dimension between the start and end of the path. The second level terms are linked to areas bounded by the path. Figure~\ref{fig:sig_illustration} offers a visual representation of the intervals and areas captured by the first and second level signature terms for a two-dimensional path $X: [0,T] \rightarrow \mathbb{R}^2$ starting at the origin, with the dimensions denoted by $x$ and $t$. Notice that $S^1(x)_{0,s} = x_s - 0 =x_s$ and $S^1(t)_{0,s}=t_s-0=t_s$, i.e. the increments in each dimension. For second level terms, we have $S^2(x,t)_{0,s}=\int_0^{s} S^1(x)_{0,u} \: dt_u = \int_0^s x_u \: dt_u$, $S^2(x,x)_{0,s}=\int_0^{s} S^1(x)_{0,u} \: dx_u = \int_0^s x_u \: dx_u = 1/2 x_s^2$, and similarly $S^2(t,x)_{0,s}=\int_0^s t_u \: dx_u$ and $S^2(t,t)_{0,s} = 1/2t_s^2$.
 In particular, cross terms, $S^2(X^iX^j)_{a,b}, \: i\neq j$, register how some dimensions tend to change before others and is related to the L\'{e}vy area for the path.\footnote{For a two-dimensional path, the L\'{e}vy area $A$ is given by
\begin{equation*}
    A = \frac{1}{2} \big(S^2(X^1X^2)_{a,b} - S^2(X^2X^1)_{a,b}\big).
\end{equation*}
The L\'{e}vy area encodes the order of events,
where a positive value of $A$ indicates $X^1$ is typically followed by $X^2$. See \citet{Yang2017DevelopingTP} for further discussion.} These and higher signature levels thus serve as a ``geometric footprint'' for the path.

\begin{figure}
    \caption{Illustration of the first two levels of the signature for a two-dimensional path (with variables \(t\) and \(x\)).}
    \centering
    \includegraphics[width=0.6\linewidth]{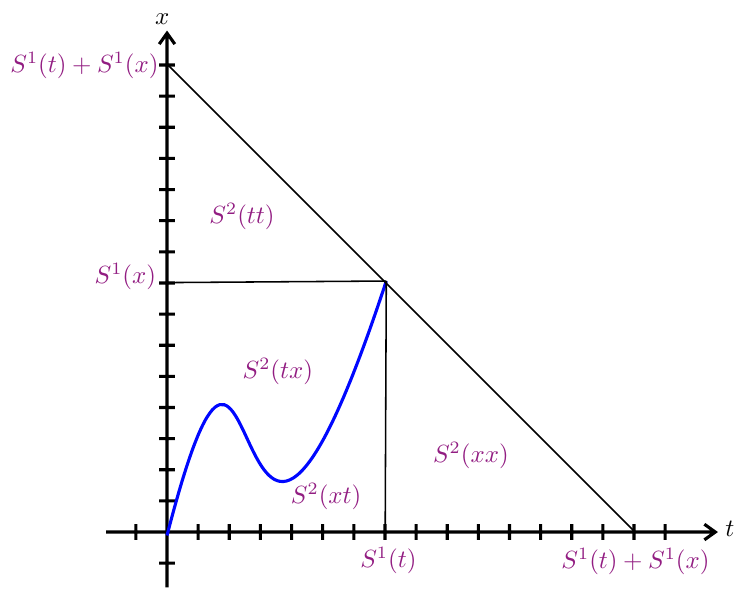}
    \label{fig:sig_illustration}
\end{figure}

Considering the signature as a class of approximating functions, there is a (Stone--Weierstrass) theorem that gives universality of this approximation class \citep{levin2013learning}.
That is, any continuous function on paths can be approximated arbitrarily well through a linear combination of signature terms.\footnote{For further information on approximation results, see Appendix~\ref{app:consistency}.} Intuitively, as mentioned above, linear functions of signatures behave like ``polynomial functions of paths''. Because of this, all estimation procedures involving learning from paths can be approximated (arbitrarily well) using a linear function of the signature.

As signature terms are iterated path integrals, they inherit the invariance properties of path integrals. The terms are invariant to translation of the path and to any monotonic increasing reparameterisation in time.

The mathematical theory underpinning signature methods is well-developed, with several key results supporting both theoretical analysis and practical computation. For example, Chen's identity \citep{chen1954iterated} provides a way to compute the signature of a concatenated path\footnote{A path formed by joining two or more path segments end-to-end.} as an algebraic product of individual path segments.
This identity allows for recursive or piecewise computation of signatures, making it especially useful for when the data arrives in batches, as long paths can be processed incrementally rather than recomputed from scratch. As a result, Chen's identity plays a vital role in the efficient and scalable computation of signatures.

\subsection{Practical signature computation}\label{sec:practicalsig}

In the previous section, we explained that the signature is a natural object for describing paths in general and time-series data in particular. However, when working with real data, we only observe variables at discrete intervals, rather than continuous paths. As a result, we need to interpret new data releases as coming from some underlying continuous time process and exploit connections to model the target variable. Using signatures as features is a natural way to make sense of data released irregularly in continuous time and sidesteps the mixed frequency and missingness challenges covered in Section \ref{sec:nowcasting_challenges}. 

There are many choices we can make to approximate the signature, and different methods are suitable depending on the points in time when we want to make a prediction. As signatures describe continuous paths, but data are typically discrete, we must interpolate the observations, to form a path, before computing the signature. The choice of interpolation can matter, and has been explored \citep{morrill2022onthe}, but in general the signature should not be greatly affected by the frequency of sampling. Smooth cubic splines can be used when only a single prediction is made, as this interpolation is not causal. Rectilinear interpolation (as discussed above) is more appropriate for data measured at different times and frequencies when a prediction may be needed at any point. In nowcasting, we have an online/causal problem (that is, we only want predictions to be made based on available information) where we wish to make inferences at given points: in this case, the interpolation method employed should be either continuously online or discretely online (see \cite{morrill2022onthe} for more details). In practice, we can use forward filling (discretely online) and rectilinear interpolation (continuously online) in this setting.

The number of signature terms at each level is dependent on the dimension of the path: this means that if the number of explanatory variables is fixed, the number of signatures at each level is fixed too.
In practice, the iterated integrals that define signatures can be efficiently computed with software packages such as \texttt{iisignature} \citep{iisignature} and \texttt{esig} in Python.

It is important to note that the number of terms increases exponentially with the level of the signature, that is, for a $d$-dimensional path, the signature truncated at the $k$th level, contains \(\sum_k d^k\) terms. So, while signatures can universally approximate any path objects, they do so with an exponentially increasing number of terms, and including all of these would increase model complexity and could result in over-parameterisation. 

However, an important result in rough paths limits the extent to which the exponentially increasing number of levels is a problem: the magnitude of signature terms decays factorially with increasing levels \citep{lyons2007differential}. If we take \(X:[0,T] \to \mathbb{R}^d\) to be a path with finite one-variation (total variation), i.e. 
\[||X||_{1,[0,T]} \equiv \sup_{P \in \mathcal{P}} \sum_{i=0}^{n_P-1} |X_{i+1} - X_i| <\infty,\]
where \(\mathcal{P}\) is the set of all partitions of \([0,T]\) and \(n_P\) is the number of points in partition \(P\), then, for each level \(k \geq 1\), 
\begin{equation}\label{eqn:sig_magnitude}
    S^k(X\dots X)_{[0,T]} = \int_0^{T}\cdots\int_0^{t_{1}} dX_{t_0}  \dots dX_{t_{k-1}} \leq \frac{||X||^k_{1,[0,T]}}{k!}.
\end{equation}

Therefore, as the level of signature terms increases, the magnitude of the terms decreases factorially -- which is faster than exponentially -- and we still obtain a good approximation to the signature by truncating it at a sufficiently high level.

There are two different ways to incorporate information into nowcasting models, and trained prediction models in general: \textit{expanding window}, where new information is appended to existing data, or \textit{rolling window}, in which data older than a certain lookback period is discarded from the model. When using an expanding window, certain terms in the signature may be increasing. For example, the first level signatures terms are just the differences/increments between the start and the end of the path, and therefore the signatures corresponding to time can increase as more observations become available.

Due to the invariance to reparameterisation property of signatures, it is common to manually add time as one of the input variables if the rate at which things occur is relevant (as is typical in prediction problems). Adding time as one of the variables is referred to as \textit{time-augmentation}. This also ensures that signatures are unique \citep{hambly2010uniqueness}. In the methods we consider, we will always make use of the time-augmented path. Other information that can boost performance, especially in contexts where data are not missing at random, include the time since the last observation or counts of the number of observed measurements \citep{morrill2022onthe}.

\subsection{The signature as a source of features in prediction}\label{sec:sigprediction}

In this section, we briefly review the use of signatures as features and highlight some variations of ``signature methods'' in other applications.

One very useful property of truncated signatures is that if two paths are ``close'' (under the $1-$variation norm), then the truncated signatures of these paths will also be close \citep{giusti2020iterated}.
Combined with the universal approximation property (by the Stone--Weierstrass Theorem \citep{levin2013learning}), signatures therefore should be good candidates to generate features for prediction or classification tasks. Further, because linear combinations of signatures can effectively capture non-linear relationships, even linear regression via ordinary least squares is theoretically sufficient to model non-linear problems.

Regression on features from signatures has been used in high-frequency financial time series \citep{lyons2014afeature, levin2013learning} and in freight transportation price forecasting, where it provided an annualised saving of \$50m for Amazon \citep{gu2024transportation}. \citet{fermanian2020linear}
looked at functional linear regression where information on predictors over time is available; this is similar to the setting of economic nowcasting. \citet{fermanian2020linear} showed that signature regression is competitive with traditional functional regression methods such as functional principal components and B-splines.

The truncation level can be chosen in different ways. One of the simplest is hyper-parameter search \citep{morrill2021neuralrough} but others have taken more systematic approaches. For example, \citet{fermanian2020linear} assumes that the true model is linear in some truncated signature space: that is the model can be replicated with some \(m^*\). An estimator for the truncation level \(\hat{m}\) is designed to minimise the sum of the mean square error of the model and a penalty based on model complexity. This estimator is used with Ridge regression for their functional regression model. \citet{bleistein2023learning} show that, using LASSO for regularisation, the truncation error decays exponentially fast with the truncation level.

Note that there are other choices to be made in deriving features from signatures, some of which might be called ``model tuning'', that we do not fix \textit{a priori}. Some attempts have been made to systematically categorise the types of operations and model choices that can be made. For example, \citet{morrill2021generalised} splits these choices into what they call ``modifications'' with four primary groups: augmentations, windows, transforms, and rescalings.
We forgo further discussion of these techniques, in the interest of focusing on the core simplicity of the method.

\subsection{Signatures generalise linear Kalman filters}\label{sec:signKalman}
In this section we show that the Kalman--Bucy filter of Equation (\ref{eqn:opt_filter}) is equivalent to a linear regression problem on the signature space. In other words, we can express $\hat Y$ as a linear combination of the iterated integrals of the time-augmented observation process $(t, X_t)$. Though this can be derived more abstractly from the general universal properties of the signature previously mentioned, we provide a formal proof.

The equivalence result we prove holds in any dimension of observation and filter processes. Therefore, it includes the full range of (continuous time) vector autoregressive and linear state-space models. This suggests that linear regression on signatures has the benefit of avoiding many of the modelling restrictions needed in order to apply filtering methods in situations that are of great practical relevance.

\begin{theorem}\label{thm:KalmanRepresentation}
The continuous time Kalman--Bucy filter $\hat Y$ (including the case with deterministic time-varying parameters), as described by Equation \eqref{eqn:opt_filter}, can be written as a linear function of the initial estimate $\hat Y_0$ and the signature of the augmented observation process $(t,X_t)$. In particular, this representation uses only the signature terms where the observation $X$ appears once in the iterated integral.
\end{theorem}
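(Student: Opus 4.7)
The plan is to cast the filter SDE in canonical linear form, apply the variation of constants formula, and then expand its deterministic coefficients so that the resulting iterated integrals line up with signature components of the augmented path $(t, X_t)$.

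First I would regroup the drift of \eqref{eqn:opt_filter} as
\begin{equation*}
    d\hat Y_t = (G_t \hat Y_t + g_t)\, dt + M_t\, dX_t,
\end{equation*}
with $G_t := F_t - R_t H_t^{\top} H_t$, $g_t := f_t - R_t H_t^{\top} h_t$, and $M_t := R_t H_t^{\top}$; all three are deterministic because $R_t$ is (Section~\ref{sec:cts_kf}). Letting $\Phi_t$ denote the deterministic fundamental matrix of $\dot\Phi_t = G_t \Phi_t$, $\Phi_0 = I$, variation of constants gives
\begin{equation*}
    \hat Y_t = \Phi_t \hat Y_0 \;+\; \int_0^t \Phi_t \Phi_s^{-1} g_s\, ds \;+\; \int_0^t \Phi_t \Phi_s^{-1} M_s\, dX_s.
\end{equation*}
The key structural observation is already visible here: the stochastic integrator $dX$ appears in exactly one term and exactly once, which is precisely what will restrict the signature words featuring in the final expansion to those containing $X$ at most once.

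Next I would expand $\Phi_t$ and $\Phi_t \Phi_s^{-1}$ through their Peano--Baker (Dyson) series and substitute back into the three summands above. After extracting the deterministic matrix coefficients, each resulting contribution is, by Fubini, a scalar multiple of an iterated integral of $(t, X_t)$ whose integrators are $ds$'s with at most one $dX_s$. A direct bookkeeping identifies these iterated integrals with signature components of the time-augmented path: the $\Phi_t \hat Y_0$ and $g$-contributions produce pure-$t$ signature components (grouping with the initial-estimate term in the theorem), while the $M$-contribution produces signature components in which $X$ occurs exactly once and any number of $t$'s surround it in a prescribed position. Collecting everything yields $\hat Y_t = A\,\hat Y_0 + \sum_w c_w\, S_w(t, X)_{0,t}$, with the sum running over signature words $w$ containing $X$ exactly once.

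The main obstacle is the genuine time-dependence of $\Phi_t$, $\Phi_t\Phi_s^{-1} g_s$, and $\Phi_t\Phi_s^{-1} M_s$: these are in general not polynomial in $s$ and $t$ (even when $F, H, f, h$ are constant, $R_t$ is a non-polynomial solution of the Riccati equation \eqref{eqn:filter_var}), so the Peano--Baker integrals are not literal signature terms until the time-dependent integrands are themselves expanded. I would handle this by Taylor expanding the deterministic coefficients in $(s, t)$ and reassembling products of monomials in $t$ with $X$-containing iterated integrals into single signature components via the shuffle-product identity for signatures. The factorial bound \eqref{eqn:sig_magnitude} then guarantees absolute convergence of the resulting double series on any bounded horizon, and the structural property that each summand contains at most one $dX$ integrator is preserved throughout this expansion, completing the claim.
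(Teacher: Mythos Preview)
Your proposal is correct and follows essentially the same route as the paper: recast \eqref{eqn:opt_filter} as $d\hat Y_t = G_t \hat Y_t\, dt + d\xi_t$ (the paper writes $A_t$ for your $G_t$ and bundles your $g_t\, dt + M_t\, dX_t$ into a single $d\xi_t$), solve by Picard/Peano--Baker iteration, and then approximate the deterministic time-dependent coefficients by polynomials in $t$ so that each iterated integral becomes a signature term of $(t,X_t)$ with at most one $dX$. One small caveat: the paper invokes Stone--Weierstrass rather than Taylor expansion for this polynomial step, which is the safer phrasing since $R_t$ (and hence $G_t, g_t, M_t$) need only be continuous, not real-analytic, for the argument to go through.
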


\begin{proof} With the natural time-varying change to the notation set out in Section~\ref{sec:cts_kf}, write
\begin{align}
A_t &\equiv F_t -R_t H_t^\top H_t,\nonumber\\
\xi_t &\equiv  \int_0^t (f_s - R_s H_s^\top h_s) ds + \int_0^t R_sH_s^\top dX_s, \label{eqn:xi}
 \end{align}
which gives us the simplified expression for the filter
\begin{equation}\label{eq:simplifiedXhat} d\hat Y_t = A_t \hat Y_tdt + d\xi_t.\end{equation}

Denote the iterated integrals of $A$ and $\xi$ by
\begin{align*}
    \mathbb{A}^n_t &= \int_0^t\int_0^{t_1}\cdots\int_0^{t_{n-1}}\Big(A_{t_1}A_{t_2}\cdots A_{t_n}\Big)dt_{n}\cdots dt_1,\\
    \Xi^n_t &= \int_0^t\int_0^{t_1}\cdots\int_0^{t_{n-1}}\Big(A_{t_1}A_{t_2}\cdots A_{t_n}\xi_{t_n}\Big) dt_{n}\cdots dt_1,
\end{align*}
with the conventions $\mathbb{A}^0 =\mathrm{I}_d$ (the identity matrix) and $\Xi^0_t = \xi_t$. Observe that these are both formally solutions of the recurrence relation
$Q^n_t = \int_0^t A_s Q^{n-1}_s ds$ with different values for $Q^0$ (and in different dimensions). We note that the $n$th iterated integral is super-polynomially small (that is, it satisfies Equation (\ref{eqn:sig_magnitude})), and in particular the infinite sums $\sum_{n\ge 0} \mathbb{A}^n_t$ and $\sum_{n\ge 0} \Xi^n_t$ are both well defined.

Writing $\hat Y$ in integral form, we have
\[\hat Y_t = \hat Y_0 +\int_0^t A_s \hat Y_s ds + \xi_t.\]
This integral equation has a unique solution, which can be represented as the series
\[\hat Y_t = \sum_{n\ge 0} \Big(\mathbb{A}^n_t \hat Y_0 + \Xi^n_t\Big).\]
Given the recurrence relation mentioned above, we can see that this is a solution to the integral equation.
This shows that $\hat Y_t$ is a (linear) function of its initial value $\hat Y_0$ and the iterated integral processes $\mathbb{A}^n$ and $\Xi^n$. This expansion is very closely related to the Picard series approximation of the stochastic differential equation defining $\hat Y$ (Equation \eqref{eqn:opt_filter}) . 

It remains to show that $\mathbb{A}^n$ and $\Xi^n$ can be expressed in terms of the (joint) signature of the time-augmented path $(t, X_t)$. As $A$ is a continuous deterministic function of time, we know that (over any finite time horizon) it can be approximated arbitrarily well by a polynomial (by the Stone--Weierstrass theorem). As the signature of time is simply the sequence $\mathbb{S}(t)= (1, t, t^2/2, \dots, t^n/n!,\dots)$, we see that $A$ can be written as a (matrix-valued) linear function of the signature of $t$. 

The process $\xi$ in Equation~(\ref{eqn:xi}) is slightly more delicate, as it depends on both time and the observations $X$. The first term $\int_0^t (f_s - R_s H_s^\top h_s) ds$ is deterministic, so again has a polynomial expansion in terms of the signature of $t$. Considering the second term $\int_0^t R_sH_s^\top dX_s$, we see that if $H$ is continuous\footnote{If $H$ is not continuous, then we need to use the fact that polynomials in time are dense in the $L^2([0,t])$ space, which is the appropriate space to consider given we have an outer integral with respect to the process $X$. In this case we will still have a polynomial approximation $p_n(s) \approx R_sH_s^\top$, such that the integral $\int_0^t p_n(s) dX_s \to \int_0^t R_s H_s^\top dX_s$ converges in a mean-square sense as $n\to \infty$.}, $R_sH_s^\top$ has an expansion in terms of the signature of time, so $\int_0^t R_sH_s^\top dX_s$ has an expansion in terms of the signature of $(t,X_t)$, with the special form where the only integral with respect to $X$ is the outermost one.

As both $A$ and $\xi$ have expansions in terms of the signatures of $t$ and $(t,X)$ respectively, it follows that their iterated integrals $\mathbb{A}$ and $\Xi$ also have expansions of this type. For our purposes the explicit values of this expansion are not of particular interest (and do not have a simple algebraic form), but the definition of $\mathbb{A}^n$ immediately shows that if $A$ can be written as a polynomial in time, then so can $\mathbb{A}^n$ for each $n$. Similarly for $\Xi^n$, but now this will involve iterated integrals with a single $X$ integral included. 

\end{proof}

The absence of iterated integrals with more than one $X$ integral in the theorem above is equivalent to the linear dependence of the filter $\hat Y$ on the observations $X$. We refer to the iterated integrals with only a single $X$ integral as ``linear'' signature terms. 

The construction above shows that we can expand $\hat Y$ in terms of the initial estimate $\hat Y_0$. The next lemma shows that we have great flexibility when constructing our signature expansion for the filter, as we can use the signature over any sufficiently large horizon in our representation of the filter.
\begin{lemma}\label{lemma:longerwindow}
For a continuous path $\hat Y$, and any time $t\in [0,T]$, the value of $\hat Y_0$ can be expressed as a linear function of $\hat Y_t$ and the linear signature terms of $(t, \hat Y)$. Consequently, we can express the Kalman--Bucy filter $\hat Y_t$ as a linear function of $\hat Y_s$ and the linear signature of $X$ on $[s',T]$, for any $s'\leq s < t$.
\end{lemma}
\begin{proof}
We first focus on the expression for $\hat Y_0$.   The iterated integrals of time give polynomials (of the form $t^n/n!$). By using the inner products
   \begin{align*}
       \langle \hat Y, t^n/n!\rangle_{L^2} &= \frac{1}{n!}\int_0^T \hat Y_s s^n ds\\
       &= \hat Y_0 \frac{T^n}{n!} + \frac{1}{n!}\int_0^T \Big(\int_0^s d\hat Y_u\Big)s^n ds
   \end{align*}
   and similarly $\langle t^m/m!, t^n/n!\rangle$.
   A simple application of integration by parts shows that the final integral term of the above equation can be expressed as the sum of two signature terms, of the form $S^n(\hat yt\cdots t)$ and $S^n(t\cdots t\hat y)$. 

   Suppose we compute a regression expansion of \(Y_t\) in terms of \(\{t^m/m! \}_{m \leq M}\).
    As $\hat Y$ is a continuous path, and the polynomials are dense in the $L^2$ space on $[0,T]$, 
    this (theoretical) expansion perfectly approximates \(Y_t\) as we take \(M \to \infty\).
   It follows that $\hat Y_t$ can be evaluated by evaluating a linear function of the signature terms and $\hat Y_0$. Rearrangement gives the desired expression for $\hat Y_0$ in terms of $\hat Y_t$.

   As we can write $\hat Y_0$ in terms of the signature, we now focus on the representation of $\hat Y_t$ in terms of $\hat Y_s$ and the signature of $X$ on $[s', T]$. By translation, we can assume $s'=0$. We have  seen that $\hat Y_0$ can be expressed in terms of $\hat Y_s$ and the linear signature of $Y$. From Theorem~\ref{thm:KalmanRepresentation}, we know that $\hat Y$ can be expressed as a linear function of $\hat Y_0$ and the linear signature of $X$. Therefore, by a direct calculation, the linear signature terms of $\hat{Y}$ can also be expressed as a linear function of $\hat Y_s$ and the linear signature of $X$, as required. 
\end{proof}

We conclude with an example where we explicitly compute the optimal filter in terms of the signature.

\begin{example}\label{ex:derive_coeffs_cts}
Consider the situation where $d=p=m=1$, $f=h=0$, $F=-1$, $H=1$ and $\sigma=\sqrt{3}$. Then our filter equations simplify to 
\begin{align*}
    d\hat Y_t &= -\hat Y_tdt + R_t(dX_t -\hat Y_tdt),\\
    \frac{dR_t}{dt} &= 3-2 R_t - R_t^2.
\end{align*}
For simplicity, assume the initial variance is at the steady state $R_0=R_t = 1$, so we have
\[d\hat Y_t = -2\hat Y_t dt + dX_t,\]
which can be solved as 
\begin{align*}
    \hat Y_t &= e^{-2t}\hat Y_0 + \int_{0}^t e^{-2(t-s)} dX_s.
\end{align*}
Assuming \(X_0=0\), the signature expansion of $\hat Y$ can also be computed, as was done for Equation~\eqref{eq:simplifiedXhat}, with the identity $A_t= -2$ and $\xi_t=X_t$, to give\footnote{Here we write $S^n(\cdots)$ for the $n$-fold iterated integral with respect to the sequence indicated (which must be of length $n$, with the innermost integral listed first); i.e. $S^3(xtt) = \int_0^t\int_0^{t_1}\int_0^{t_2} 1\,dX_{t_3}\, dt_2\, dt_1$. Note that in this example all signature terms are taken over the interval $[0,t]$, so we drop the subscript on signatures for convenience.}
\begin{align*}
    \mathbb{A}^n_t &= (-2)^n \frac{t^n}{n!} = (-2)^n S^n(t\cdots t)\\
    \Xi^n_t &= (-2)^n S^{n+1}(xtt\cdots t),
\end{align*}
and hence Theorem \ref{thm:KalmanRepresentation} yields
\begin{align*}
    \hat Y_t &= \sum_n \Big( \mathbb{A}^n_t \hat Y_0 + \Xi^n_t\Big)\\
    &= \hat Y_0+\sum_{n\ge 1} \Big([(-2)^n\hat Y_0] S^n(t\cdots t) + [(-2)^{n-1}] S^{n}(xtt\cdots t)\Big)\\
    &= \hat Y_0\Big(1 -2  t + 2 t^2 - \frac{8}{6}  t^3 + \frac{16}{24} t^4+\dots\Big)\\
    &\qquad + X_t - 2\int_0^t X_s \,ds + 4 \int_0^t\int_0^{t_1} X_s \,ds\, dt_1 \\
    &\qquad -8\int_0^t\int_0^{t_1}\int_0^{t_2} X_s\, ds\, dt_1\, dt_2+\dots
\end{align*}
In particular, if $t$ is small, the first few terms of this series provide a good approximation for the value of $\hat Y_t$, in terms of a linear function of the signature. Note that $\hat Y_t \approx \hat Y_0 -2\hat Y_0 t + X_t$, the single-step Euler--Maruyama approximation \cite{kloeden1992numerical} of Equation~\eqref{eqn:opt_filter}, appears as the first terms in this expansion.

The simple structure we obtain here is due to the assumptions we have made on our state-space model (in particular, the absence of signature terms of the form $S^n(t\cdots txt\cdots t)$ is due to the assumption the variance is in its steady state).
\end{example}

The key advantage of this approach, even while restricting our attention to the Kalman--Bucy state-space model,  is that we now have an expansion which is valid\footnote{In practice, as with polynomial approximations of the exponential function, the number of terms needed for large $t$ can increase rapidly, along with their coefficients, leading to numerical instability. For this reason, the performance of the approximation is likely to degrade as the time between official releases (where the underlying value $Y$ is observed) increases.} for all $t$. . This simplifies dramatically the problem of working with data at mixed frequencies, as we can evaluate the filter state at any $t$, in terms of the corresponding signature terms, rather than having to compute (as is done, for example, in a MIDAS model) a version of the filter that depends on the timing of observations.

A further advantage of signature methods is that they allow us to easily incorporate non-linearity into our state-space models. For example, if our observations $X$ were replaced in the Kalman--Bucy setting by $\bar X = \exp(X)$. Then, by It\^{o}'s chain rule, we know that
\[d\bar X_t = \bar{X}_t d X_t + \frac{1}{2} \bar{X}_t dt 
\iff dX_t = \frac{1}{\bar{X}_t}d\bar{X}_t - \frac{1}{2}dt.\]
Substituting into Equation~(\ref{eqn:opt_filter}), we see that the Kalman--Bucy filter can be written in terms of the modified observation process $\bar X$, and by very similar arguments to before, this would have an expression in terms of the signature, but with terms involving multiple integrals with respect to $\bar X$.

This suggests that using the signature expansion is robust to the specification of the observation time series. That is, \textit{the usual issues around the choice of transformations (whether to use a series, its logarithm, differences, etc...) are less significant when using the signature method, as they are absorbed into the signature expansion}. The price that we pay to capture these non-linear relationships is that the expansion contains multiple integrals with respect to $\bar X$, so we lose the ability to write the filter with only the ``linear'' signature terms and hence all signature terms should be retained in model fitting.

Finally, in Appendix \ref{app:consistency}, we show that regression on signatures retains the consistency properties we would expect from ordinary least squares. In particular, we show that under stationary-ergodic and autocovariance assumptions about the underlying timeseries that we compute signatures from, the parameters obtained from regression on signatures are consistent.

In summary, we have shown that signatures generalise a wide range of nowcasting methods and their variants in any number of dimensions and even with non-linearities, and that regression on signatures has theoretical properties that make it appropriate for use in time series problems.

\section{Nowcasting via regression on signatures}\label{sec:sigregression}

In this section we present how to use regression on the signatures of observations for nowcasting.

\subsection{Model specification}

Let $Y_{t}$ be the (low-frequency) target variable we wish to nowcast. Let $X_{t}$ denote the (high-frequency) observed explanatory variables at \(t\). 
We wish to build an estimate of $Y_t$ based on a previous observation \(Y_{t-}\) and observations of \(\{ X_s \}_{s \in [s', t]}\), where \(s'\) is the start of the lookback window.
Motivated by the approximation of the Kalman filter in terms of signatures, we suppose that
\begin{equation}
Y_t = \sum_{k=0}^{\infty} \left(\alpha_k+\beta_k Y_{t-} \right) \psi_{k,t} + \epsilon_t  
\label{eqn:regressionsig}
\end{equation}
where
\begin{itemize}
    \item $Y_{t-}$ is a previous observation of the low frequency target variable that is available at the time of nowcast. This could be the target variable at the beginning of the lookback window over which the signatures are computed, or the most recent available observation;
    \item $\psi_{k,t}$ is a sequence (for each value of $t$) of signature terms at level \(k\), including iterated integrals of $t$ and the different components of the observed process $X$, calculated over the lookback window $[s',t]$ ending at the present 
    \item $\epsilon_t$ is a stationary error term with $E\big[ \epsilon_t \big| Y_{t-}, \{\psi_{k,t}\}_{k=0}^\infty \big]=0$;
    \item $\alpha_k, \: \beta_k$ are sequences of regression coefficients.
\end{itemize}
As outlined in Section~\ref{sec:signKalman}, in order to replicate the Kalman filter, we would have $\beta_k=0$ whenever $\psi_k$ corresponds to a signature term depending on $X$, and \(\alpha_k, \:\beta_k\) nonzero only for those signature terms that depend either purely on $t$, or on a single component of $X$ appearing once in the iterated integral. Both of these restrictions can be relaxed, leading to a richer class of models than can be captured by the (linear) Kalman filter.

In practice, for a fixed finite sample, we must project onto a finite approximation by truncating the terms of the signature at some level $K<\infty$. As discussed in Section~\ref{sec:sigprediction}, there are a variety of practices for choosing the signature truncation level \(K\). We follow \cite{morrill2021neuralrough} and select the level through hyper-parameter optimisation. For real-world data, we search over a range of levels (in features and time) up to level $4$: this is enough to capture the changes in the area between time-series (see Section~\ref{sec:sigs}) and higher dimensional equivalents. 

Note that if the gaps \(t-t_-\) vary significantly over the prediction period, it would make sense to account for this by using a richer class of models, where \(t-t_-\) gives another feature for the signature computation.
Note also that the signature terms are typically highly multi-collinear. This can be addressed through any standard technique, e.g.~using Principal
Component Analysis (PCA) to capture the most important information, or using only a smaller number of terms in the expansion. 

In the previous section, we explained that the signature is a natural object for describing paths. Moreover, signatures can be efficiently computed with software packages such as \texttt{iisignature} \citep{iisignature} and \texttt{esig} in Python. However, when working with a finite sample of data, there are a number of practical issues that must be addressed before computing the signatures. First, we must decide on the observation window used to construct the signature. Second, to approximate the integrals defining the signature, we must use some interpolation method to fill in missing information about the path. Third, we must decide on the truncation level; how many levels of signature terms to retain in the model. 

For the observation window, we choose to use a rolling window of fixed length $T_w$. That is, for each time $t$, data older than $t-T_w$ (the lookback period) is discarded from the analysis. Just as in non-parametric regression, increasing $T_w$ reduces variance (when the processes are stationary) but introduces bias when there are time-varying parameters. The constructed signature terms also depend on the lookback period length as for example, the first level signature terms are increments between the start and the end of the path. The lookback period length $T_w$ is a hyper-parameter that we choose as part of our hyper-parameter search.
Detailed algorithms for regression on signatures can be found in Appendix~\ref{sec:framework}.

The path is a continuous time process but data are discrete. The discrete points must be interpolated to compute the signature. In a finite sample, the choice of interpolation can matter, as has been explored \citep{morrill2022onthe}. Asymptotically, under conditions stated in Appendix \ref{app:consistency}, the approximate signature converges to the true signature. In practice, we consider rectilinear interpolation and forward filling.

\subsection{Regularisation}\label{sec:regularisation}

If the number of predictors are large, then even at a maximum truncation level of $K=4$, the number of signature terms can still be large. Regularisation techniques can be used to reduce over-fitting.

We take the elastic net \citep{zou2005regularization}, which includes both L1 penalty and L2 penalty terms, and therefore also performs feature selection, which helps to reduce the number of signature terms used. The optimisation problem being solved is given by
\begin{align*}
    \min_{\alpha_k, \beta_k} \: \bigg\{ & \frac{1}{n}\bigg|\bigg|Y_t - \sum_{k=0}^K (\alpha_k+\beta_kY_{t-}) \psi_{k,t}\bigg|\bigg|_2^2 \\
    & +  \sum_{k=0}^K \bigg(\gamma \lambda (||\alpha_k||_1 + ||\beta_k||_1)
+ \gamma (1 - \lambda) (||\alpha_k||_2^2 + ||\beta_k||^2_2 \bigg) \bigg\},
\end{align*}
where \(n\) is the number of time points that data are collected from,  \(\gamma\) is the regularisation strength parameter, and \(\lambda\in (0,1)\) is the L1 ratio. 
The parameters of the elastic net, \(\lambda\) and \(\gamma\), are found through hyper-parameter optimisation.

Recall that the magnitude of signature terms decreases at a factorial rate with the truncation level, $K$ (see Equation \ref{eqn:sig_magnitude}). If we introduce regularisation, then the signature terms should be normalised, else we risk eliminating all the higher level terms due to their relatively smaller magnitude. Therefore we standardise the signatures by removing the mean and scaling to unit variance once truncated terms have been obtained (but before using them in regression). For this, we use the \texttt{StandardScaler} from \texttt{sklearn} \citep{pedregosa2011scikit}.

\subsection{Rolling windows for signatures}

In Section~\ref{sec:signKalman} we saw that we can replicate the Kalman filter with a linear combination of signatures, where the linear signature terms are multiplied by the true value of the hidden process at some past time. In the context of nowcasting, to account for non-stationary behaviour of time series, we can compute signatures of our observations over a rolling or sliding window as discussed in  Section~\ref{sec:practicalsig}. 

Given that we wish to maximise the use of alternative data sources that are more timely than the target variable, our window of information should include at least the last release of the target variable. We therefore assume that the ``true'' value of the target variable will be available to the model at some point in the window that we compute signatures in, and that we can use that true value as a multiplier. This is denoted by \(Y_{t^-}\) in Equation \eqref{eqn:regressionsig}. 

Alternatively, a previous estimate can be used. As explained in Lemma \ref{lemma:longerwindow}, the multiplier can be at any fixed time within the lookback window -- but note that different past times will result in different regression coefficients.

\subsection{Hyper-parameter selection}
Regression on signatures forms a family of approximation functions, with a range of hyper-parameters to optimise over. These include the lookback window length $T_w$, the truncation level $K$, elastic net penalty parameters \(\lambda\) and \(\gamma\).

Hyper-parameters are obtained by running the models over a validation period. The best hyper-parameters are then used to fit the final model on the whole training and validation period, and this calibrated model is used in the test period.
\section{Simulation: equivalence to the Kalman filter}\label{sec:simulation}

In Section \ref{sec:signKalman}, we proved that the signature method subsumes the linear Kalman filter. We now illustrate this theoretical result in a controlled environment, by comparing the performance of the Kalman filter to regression on signatures (as described in Section \ref{sec:sigregression}) on simulated data, and to build intuition before analysing real-data applications. Specifically, we consider four cases: linear observation process with regularly sampled data, linear observation process with irregularly sampled data, non-linear observation process with regularly sampled data, and a non-linear observation process with irregularly sampled data. In each case, we compare the results from the signature method to those from the theoretically optimal Kalman filter (with perfect information of the system). We find that the residuals of the signature method are highly correlated with those from the Kalman filter in all settings except the final case of non-linear process with irregularly sampled data. However, even in this last case, the mean and the variance of the residuals from the signature method remain competitive. This study on simulated data demonstrates that signatures can effectively handle both non-linearity and irregular sampling at essentially no extra effort. In the subsequent sections, the signature method is applied to US GDP data (where there are regular monthly updates on all the predictors) and UK unemployment data (where the data is mixed frequency) -- these provide complementary and more challenging case studies for our method. 

\subsection{Simulation specification}

We simulate a $1-$dimensional continuous-time state-space model similar to Example~\ref{ex:derive_coeffs_cts}. The latent state \(Y_t\) and its associated linear observation process \(X_t\) follows the SDEs
\begin{align*}
    dY_t &= -Y_t dt + \sqrt{2} dV_t, \\
    dX_t &= 10Y_t dt + dW_t.
\end{align*}

Throughout this section, we consider two observation regimes. In the first case, we observe \(X_t\) directly (where the Kalman filter is optimal). In the second, we observe a non-linear variant, namely a sigmoid transform of \(X_t\) 
\[
    \bar X_t = g(X_t) = \frac{1}{1+e^{-X_t}},
\qquad
    \bar g(\bar X_t) := \frac{d g^{-1}}{d\bar X_t} = \frac{1}{\bar X_t(1-\bar X_t)}.
\]
In this setting, the naive Kalman applied to \(\bar{X}_t\) directly is not optimal, the baseline filter is obtained from
\eqref{eqn:opt_filter} by replacing $dX_t$ with $\bar g(\bar X_t)\,d\bar X_t$ (i.e. working with the transformed increments).

In Example~\ref{ex:derive_coeffs_cts}, by assuming that the variance is in steady state, we derived the coefficients of the optimal filter, \(\hat{Y}\), with respect to its infinite signature expansion. 
In particular, under the linearity assumptions, the only non-trivial terms in the expansion are signatures that we refer to as ``linear'' (where only one of the iterated integrals is with respect to the observed data and the rest are time integrals), and further, that the only integral with respect to the data variables is the innermost one. We adopt these assumptions for the linear observation regime. In contrast, under the non-linear observation regime we can no longer accurately represent the filter using only the innermost linear signatures; instead we must retain all signature terms up to a specified truncation level.

In both cases, we assume the simulated path starts at \(Y_0 = 0.1\). Assuming steady-state variance, we can compute \(R=H^{-2}\big(F+\sqrt{F^2+\sigma^2H^2}\big)\).  The end time, \(T\), is randomly sampled uniformly on \([0.1,1]\) and the mesh size in time is fixed at \(\Delta t = 0.005\).

For the linear observation process, because the dynamics are linear and the true parameter values are known, the ideal (generally infeasible) Kalman filter minimises mean square error. The mean of the filter is given by Equation (\ref{eqn:opt_filter}). The discretised optimal filter is then given by 
\[\hat{Y}_{t+\Delta t} = \hat{Y}_t + (F-RH^2)\hat{Y}_t\Delta t+RH(X_{t+\Delta t}-X_t).\]

To compare regression on signatures against this optimal but infeasible filter with perfect knowledge of the parameter values, we proceed as detailed in Section \ref{sec:sigregression}.

Unlike the Kalman filter, the signature method does not assume the parameter values are known: everything has to be inferred from the data.
As the purpose of this exercise is to validate that regression on signatures has the capacity to learn the optimal filter, we utilise a large training set of independent samples to estimate the regression coefficients.\footnote{See Appendix~\ref{app:consistency} for discussion of consistency in both the independent and dependent sample cases.}
We simulate $1000$ independent paths (for both \(X_t\) and \(Y_t\)) of which we take $800$ to be in the training set (to calibrate the regression model) and the remaining $200$ in the test/evaluation set.
For this simulation, we use an expanding window to compute the signatures.

We compute signatures of each observed path (\(X_t\) in the linear regime and \(\bar{X}_t\) in the non-linear regime). In the linear regime, we truncate at level $6$, keeping only the linear signatures of the form \(S^n(xtt\cdots t)\) as discussed (Example~\ref{ex:derive_coeffs_cts}).
For the non-linear regime we use \textbf{all} signatures up to level $3$ in the regression, yielding a comparable number of features ($15$ versus $13$).

Overall we conduct four simulation experiments: linear process with regularly sampled data, linear process with irregularly sampled data, non-linear process with regularly sampled data, and a non-linear process with irregularly sampled data. To determine whether the Kalman filter and regression on signatures are similar, we compare the means and variances of their respective residuals.

For the irregularly sampled cases, we randomly drop 80\% of the regularly sampled data example above, this is done uniformly at random. The signature-based model is recalibrated using this reduced dataset.

\subsection{Simulation results}

For the linear case with regular sampling, Figure~\ref{fig:sim_residuals_composite} shows the residuals (difference between target and estimated values) of the Kalman filter and the signature method plotted against one another with a line-of-best-fit. The two approaches perform similarly on the test set. The alignment of the errors of the signature method (where the parameters are fitted from data) against the ideal filter (given full information) is excellent. The means and variances are similar, suggesting that these methods are equivalent means of reaching the same results. The signature method achieves this result without knowledge of the true data generating process.

\begin{figure}[ht]
\centering
    \caption{Residuals of the Kalman filter vs the signature method on regularly sampled data for the linear process. Marginal plots are histograms of residuals. Line-of-best-fit shown in red (gradient $1.00$ and \(y\)-intercept of $-0.02$).}    \includegraphics[width=0.75\linewidth]{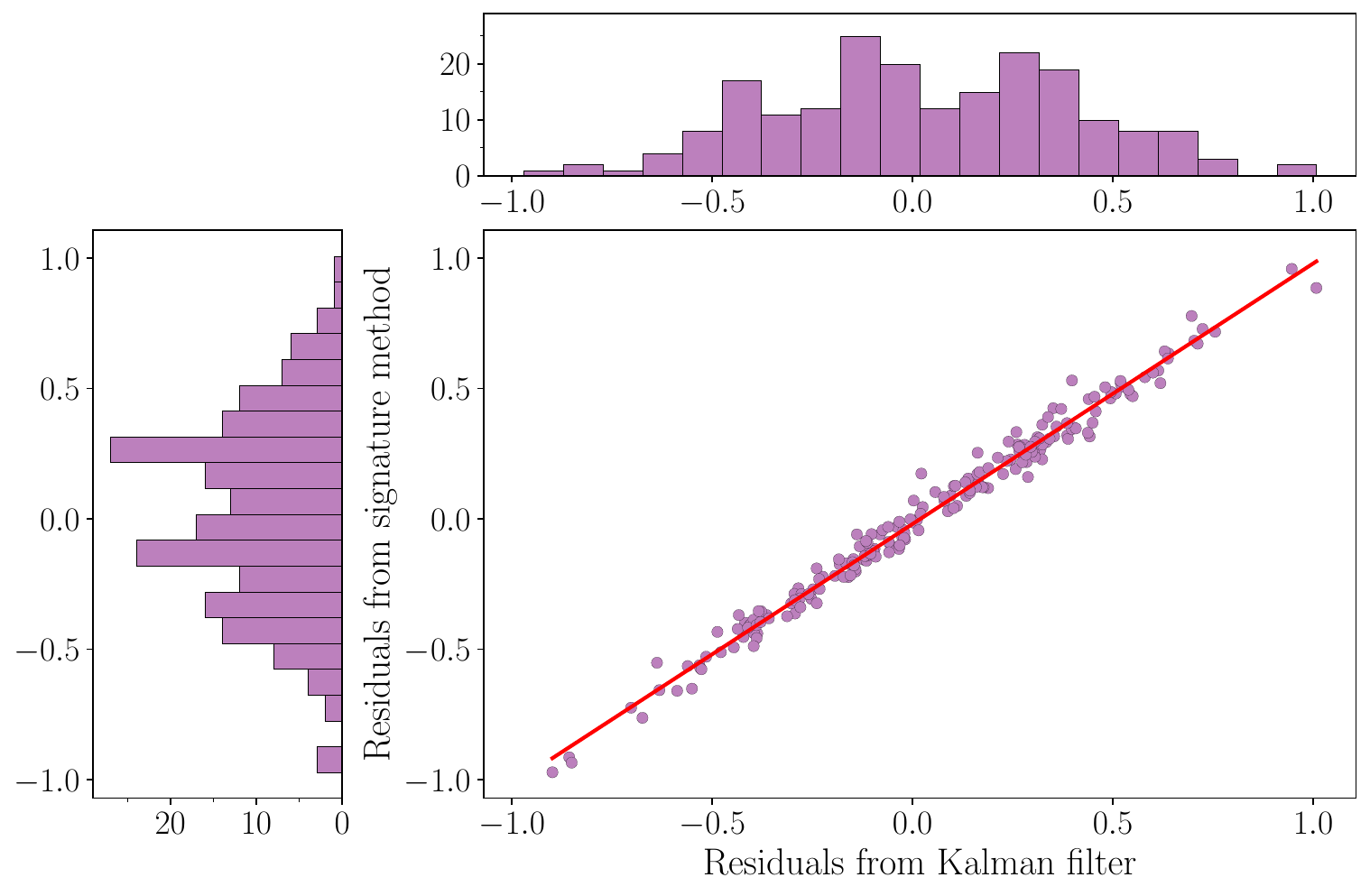}
  \label{fig:sim_residuals_composite}
\end{figure}

To give a sense of how these models perform in nowcasting contexts, we apply them both recursively on new observations to see how well they infer the path of a hidden target variable. We illustrate the recursively inferred paths on a single example from the test set in Figure~\ref{fig:traj}. The signature method and Kalman filter produce similar trajectories despite the former facing the additional difficulty of inferring unknown parameters from data. Furthermore, both inferred paths are close to the true underlying path, \(Y_t\).

\begin{figure}[ht]
    \centering
    \caption{A regularly sampled simulated path (from the test set) with observed values (\(X_t\)) along with the true, hidden values that need to be inferred (\(Y_t\)).}    \includegraphics[width=0.75\textwidth]{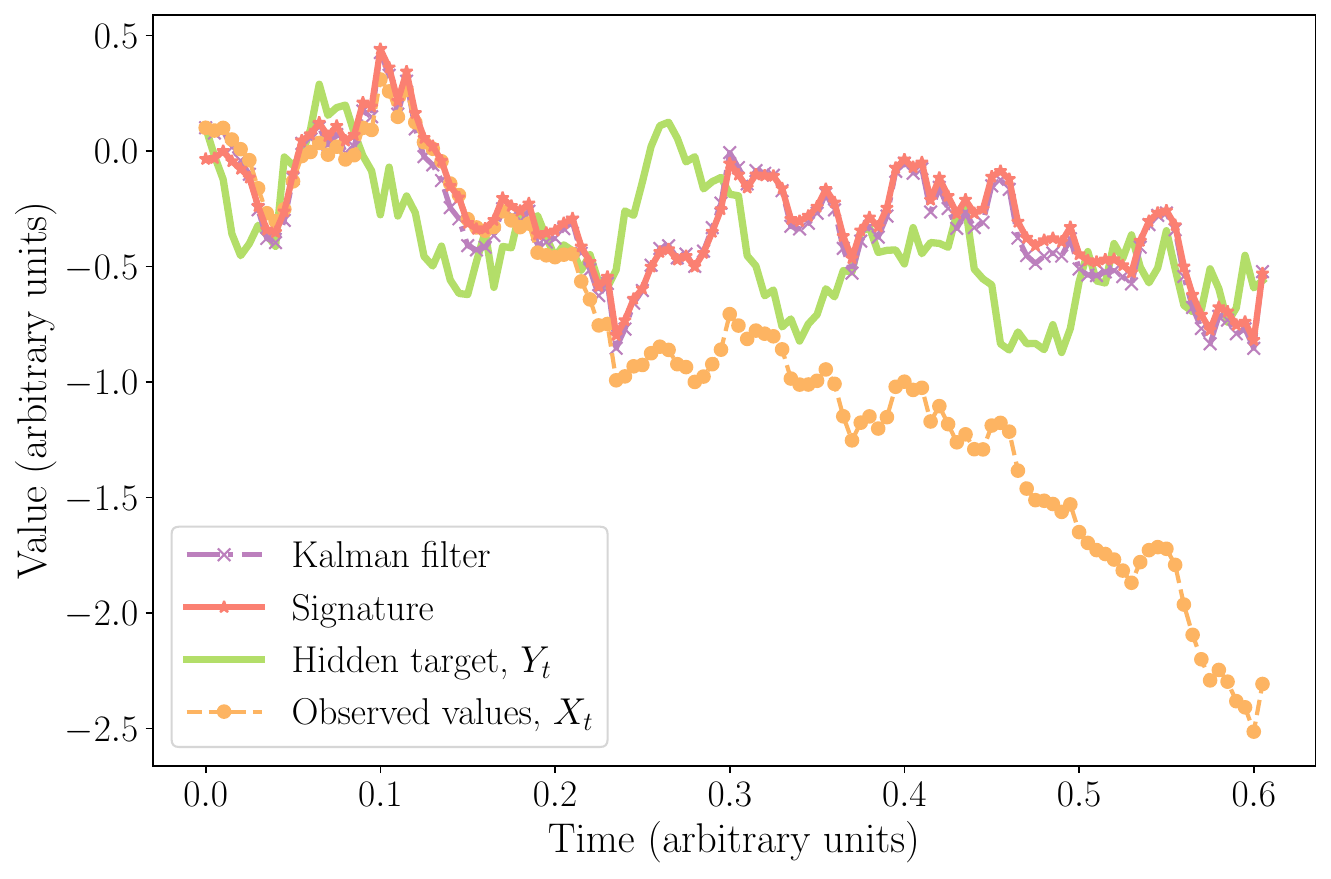}
    \label{fig:traj}
\end{figure}

\begin{table}[h!]
\caption{Means and variances from simulations: Kalman filter and signature method.}
\label{tab:simsummary}
\begin{tabular}{lrrrr}
\toprule
& \multicolumn{2}{c}{Linear} & \multicolumn{2}{c}{Non-Linear} \\
\cmidrule(lr){2-3}\cmidrule(lr){4-5}
Name & Regular & Irregular & Regular & Irregular \\
\midrule
Kalman residuals (mean) & 0.04 & 0.04 & 0.02 & 0.00 \\
Signature residuals (mean) & 0.02 & 0.01 & 0.00 & 0.00 \\
Kalman residuals (variance) & 0.14 & 0.15 & 0.14 & 0.38 \\
Signature residuals (variance) & 0.14 & 0.14 & 0.21 & 0.21 \\
\bottomrule
\end{tabular}
\end{table}

For each of the other three simulation experiments, charts similar to Figures~\ref{fig:sim_residuals_composite} and \ref{fig:traj} are presented in Appendix~\ref{app:simulation}. Here, we summarise the results in Table~\ref{tab:simsummary}. The variances of the residuals are extremely similar for the linear cases, and not considerably worse for the non-linear cases. The relationship is harder to learn in the non-linear case, especially with randomly downsampled data. However, despite the observed data having a non-linear relationship with the true, hidden values \(Y_t\), regression on signature terms obtains predictions that closely match the theoretical optimum. Regression on signatures can pick out the key information even when the data are more sparse and non-linear.

\section{Nowcasting US GDP growth}\label{sec:NYfedapplication}
In this section we apply our proposed methodology to the real-world task of nowcasting US GDP growth. Specifically, we compare our regression on signatures model to the version of the Dynamic Factor Model (DFM) developed by \citet{bok2018macroeconomic}, which was employed by the New York Fed for their staff nowcast prior to the post COVID-19 update by \citet{almuzara2023new}. 

We refer to the model proposed by \cite{bok2018macroeconomic} as the ``NYFed model". It provided weekly nowcasts of quarterly US GDP quarter-on-quarter growth using a DFM estimated with 37 monthly macroeconomic variables, released at varying dates over the quarter. In this section we show that nowcasting with signatures (as defined in Section~\ref{sec:sigregression}) combined with standard dimension reduction methods (e.g. principal components) outperforms the NYFed model. Unfortunately, we cannot use the exact dataset as proposed by \citet{bok2018macroeconomic} due to data availability constraints (see Section~\ref{US_now_data} for details). Therefore, we compare our model with two benchmarks: (i) the NYFed staff nowcasts published on their website and (ii) the same DFM structure using only publicly available data. This is a standard, well-studied nowcasting problem. We are considering only variables of the same (monthly) frequency, without any missing data. Such a setting does not fully exploit the key strengths of signatures. Nonetheless, our proposed methodology still outperforms both benchmarks, reducing nowcast error by 20\%.

\subsection{Data}\label{US_now_data}

In order to nowcast US GDP, we use $33$ of the $37$ variables listed in \citet{bok2018macroeconomic}. The four variables we omit (``ISM non-manufactoring: NMI composite index'', ``ISM mfg.: Prices index'', ``ISM mfg.: PMI composite index'', and ``ISM mfg.: Employment index'') are not publicly available. We use University of Michigan: Consumer Sentiment \citep{curtin2008consumer} as a replacement for the ISM series, meaning that we use $34$ variables in total. The complete dataset includes a selection of monthly variables that cover housing, income, manufacturing, the labour market, manufacturing activity surveys\footnote{\href{https://fred.stlouisfed.org/series/GACDISA066MSFRBNY}{Empire State Mfg Survey: General business conditions} and \href{https://fred.stlouisfed.org/series/GACDFSA066MSFRBPHI}{Philly Fed Mfg. business outlook: current activity}.}
, trade, and consumption. The indicators are monthly, but due to varying publication lags and frequencies, new information is released each week.

For further details see Appendix~\ref{sec:usgdp_factors} and Table~5.1 of \cite{bok2018macroeconomic}. 

\subsection{US GDP nowcasting models}

\subsubsection*{Dynamic Factor Model (DFM)}

Our baseline model is a DFM based on the NYFed model. The model assumes that the \(k\) observed variables, \(x_t \in \mathbb{R}^k\), are driven by $r$ unobserved factors ${f_t} = (f_{1,t}, \dots, f_{r,t})^\intercal$. In state space form, the model is
\begin{align*}
    x_t &= \Lambda f_t + e_t, \\
    f_t &= \sum_{i = 1}^{p} a_i f_{t-i} + v_t, \\
    e_t &= \sum_{i = 1}^{q} b_i e_{t-i} + w_t,
\end{align*}
where \(\Lambda \in \mathbb{R}^{k \times r}\) represents the factor loading, \(e_t \in \mathbb{R}^k\) is the idiosyncratic noise process in the observations, and \(v_t \in \mathbb{R}^r\) and \(w_t \in \mathbb{R}^k\) are independent Gaussian random noises. The parameters \(p\) and \(q\) are the orders of the autoregressive processes of the factors and the idiosyncratic noise process respectively. The NYFed model assumes  \(p=q=1\).

In factor models, the factor loadings and factors are not identified without imposing further structure. We follow \citet{bok2018macroeconomic}
and impose a dedicated four-factor structure on the factor loading matrix with the following economic interpretation. The first factor is a ``soft'' factor, which only includes variables derived from survey data, usually published earlier than official statistic variables. The second and third factors called ``real'' and ``labour'', which refer respectively to activity and labour market variables. The final factor is a global factor, which includes all predictor variables. Details of the variables in each group are listed in Appendix \ref{sec:usgdp_factors}. These restrictions are sufficient to identify the model parameters.

\citet{bok2018macroeconomic} extract the factors from the data with the following procedure: At the start, the factors are initialised by finding the principal components of each factor group. Then, there is a two step procedure: first, given the estimated factors, we estimate the loading matrix, dynamic coefficients, and variance parameters by least-squares regressions; second, given these estimated parameters, we update the common factors using a Kalman smoother and the Expectation-Maximisation algorithm (see \citet{baum1966statistical, baum1967aninequality, baum1970amaximization}). Iterations of this EM procedure converge to a local optimum of the likelihood function under standard regularity conditions \citep{dempster1977maximum, wu1983on}.

\subsubsection*{Regression on signatures with dimension reduction}
Here we outline modifications of the signature method to combine it with dimension reduction.

The process is repeated each time new information is received. First, principal component analysis (PCA) is used to find the first principal component of each factor group. We enforce the same dedicated factor structure as in the NYFed dynamic factor model.  Then we compute signatures on time and on the time series of factors derived from PCA. Finally, we run a regularised regression of US GDP on the signatures and on the latest available GDP value.\footnote{Note that this is the latest \textit{published} value -- for US GDP, there is a 30-day lag between the end of a quarter and the publication of the data.} Effectively, this means the model learns the change in GDP growth. We label this procedure Signature-PCA.

In fact, given that the signature method is largely agnostic to the data used as inputs, we can also use it as a method to improve existing nowcasts. For example, we can take the mean estimated factors from the NYFed DFM, compute the signature of these ``data'', and then use these new signatures to nowcast US GDP growth. This hybrid method allows us to combine forecast methods. We refer to this hybrid method as Signature-DFM.

\subsubsection*{Model comparison}

For each model, when new information arrives (i.e. weekly), we use information from previous quarters to fit the model and then apply the model to nowcast the current quarter's US GDP growth. We use data from 1\textsuperscript{st} Jan 2000 to 31\textsuperscript{st} December 2015 as the training set, which means that they are always used to fit the regression parameters. 
From 2016 onward, each week, new data is incorporated into the model (and new factors and principal components are computed) and another nowcast prediction is made for GDP in the current quarter. 

The period from 1\textsuperscript{st} January 2016 to 31\textsuperscript{st} December 2017 is used as a validation set to select hyperparameters such as the level of signatures to use and the length of the lookback window. The period 1\textsuperscript{st} January 2018 to 31\textsuperscript{st} December 2019 is then used as a ``test set'' with fixed hyperparameters, but the regression parameters are still tuned with new data release over the test period. The list of selected hyperparameters can be found in Appendix~\ref{sec:hyperparams_us}.

We use root mean squared error (RMSE) as our prediction quality metric.

\subsection{Results}

We first present NYFed model published results, results from a DFM using only publicly available data, and results from Signature-PCA (signature methods combined with PCA). The headline result is that, in the test period, Signature-PCA outperforms the other approaches.
This can be seen in Figure~\ref{fig:usgdp_comparison_pca_test}.

\begin{figure}[ht]
    \centering
    \caption{US GDP results for the test period with the signature method is applied to the principal components of the factor groups.}
    \includegraphics[width=0.9\textwidth]{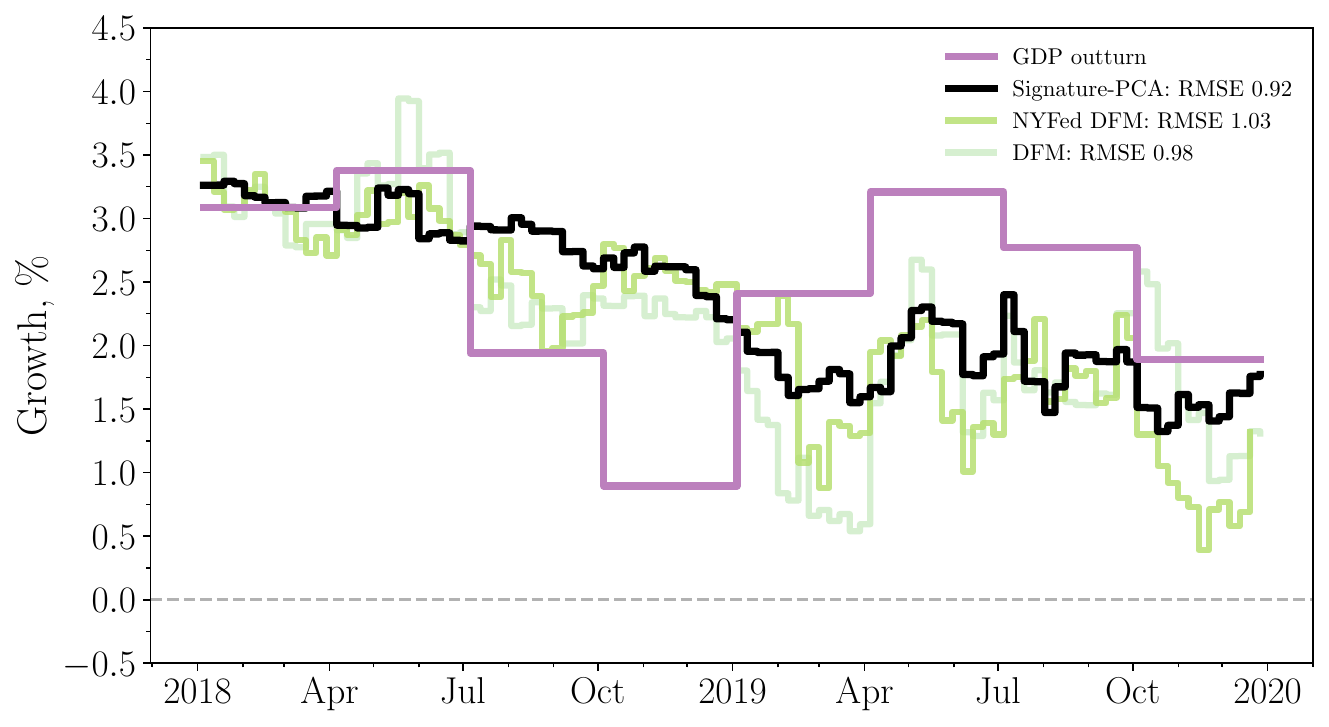}
    \label{fig:usgdp_comparison_pca_test}
\end{figure}

We also fit a hybrid Signature-DFM model. For the validation period, the Signature-DFM approach also strongly outperforms the other methods. For the test period, Figure~\ref{fig:usgdp_comparison_filtered_test} shows that Signature-DFM is superior to the DFM and to regression on signatures using PCA. Using the DFM model estimates as inputs to the signature approach, produces a noticeable improvement in prediction quality over the DFM alone. This provides evidence that signatures can be incorporated into existing nowcasting models to enhance performance.

\begin{figure}[ht]
    \centering
    \caption{US GDP results for the test period when the signature method is applied to the DFM factors.}\includegraphics[width=0.9\textwidth]{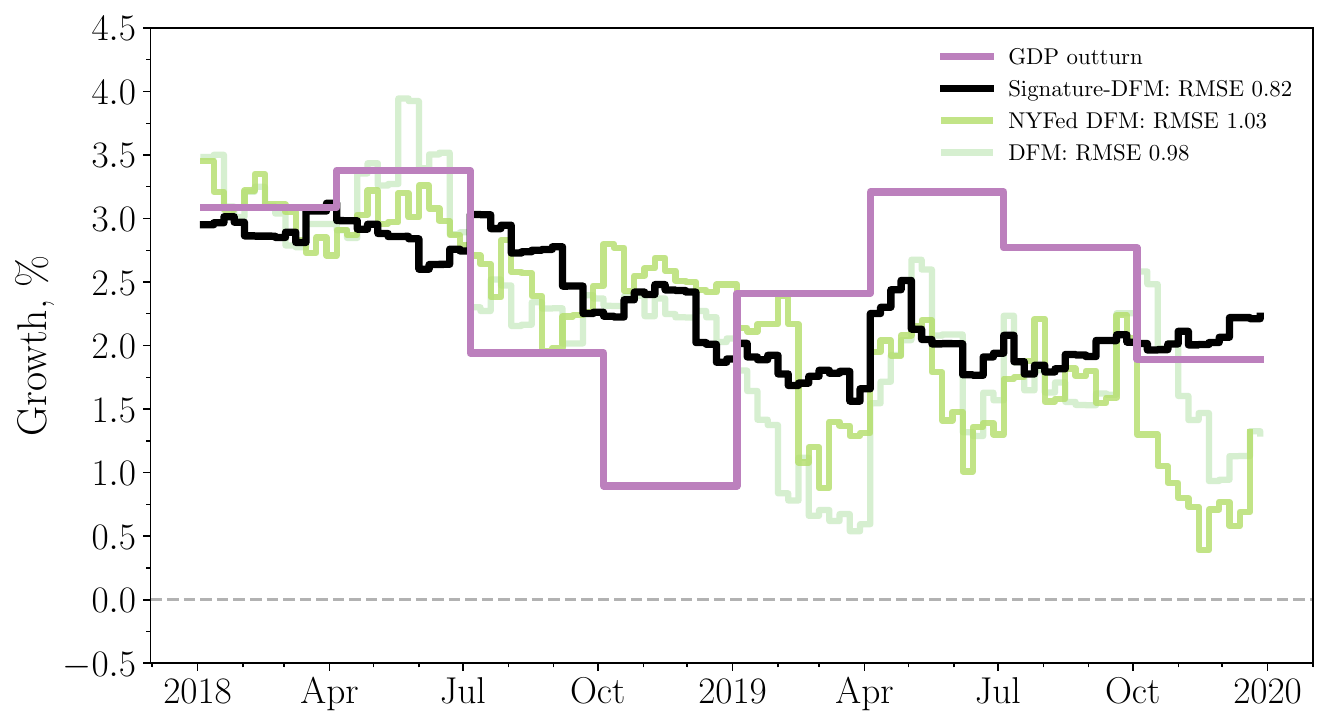}
    \label{fig:usgdp_comparison_filtered_test}
\end{figure}

\begin{table}
\centering
\caption{RMSE for the Signature-PCA, Signature-DFM approaches, as well as our DFM and the NY Fed published values.}
\begin{tabular}{lrrrr}
\toprule
Method & DFM & NYFed & Signature-PCA & Signature-DFM \\
Period &  &  &  &  \\
\midrule
Validation & 0.85 & 0.86 & 0.57 & 0.46 \\
Test & 0.98 & 1.03 & 0.92 & 0.82 \\
\bottomrule
\end{tabular}

\label{tab:usgdp}
\end{table}

The results for all models on validation and test periods are summarised in Table~\ref{tab:usgdp}. We observe all models giving higher errors in the test period. Using the signature method achieves the lowest root mean squared error, with the filtered factor version of the signature method delivering the best overall performance.

\section{Nowcasting unemployment: an application of signatures to irregular and mixed frequency data} \label{sec:lfsapplication} 

In this section we demonstrate two more properties of signatures: the ability to handle \textit{mixed frequency} and \textit{irregularly sampled} time series data. As our example, we nowcast the UK's Office for National Statistics (ONS) monthly unemployment rate (the percentage of those economically active 16- to 64-year-olds in the UK who are unemployed; labour market statistics code I46A). To do this, we use the following high-frequency variables: the weekly benefit claimant count from NOMIS (National Online Manpower Information Service); two monthly Google Trends search series (``unemployment benefit'' and ``seekers allowance''); and highly experimental ONS weekly time series covering activity, inactivity, employment, and unemployment. The objective is not to create the most accurate nowcast of unemployment (this would be achieved by including many more predictor variables) but to demonstrate the ability of signature methods to tackle irregular and mixed frequency series.

Note that, because we are using high-frequency data, some later updates in the feature series refer to periods \textit{after} the period we are trying to nowcast, and may therefore be less relevant. One way to extract additional information is to treat these post-reference period updates as separate features in the model. This would allow the signature method more flexibility in incorporating their information, compared with the same series pertaining to the reference period of the nowcast. We leave this and other advanced feature engineering approaches to future work.

\subsection{Data}

Rather than directly target the official monthly unemployment rate, $u_t$, we follow \cite{Anesti2022} and target the change in unemployment rate relative to the most recently published value, i.e. we transform using first differences. Note that the most recently published official estimate of unemployment can be either $u_{t-2}$ or $u_{t-1}$ depending on the publication lags and when the nowcast is made.

This nowcasting problem involves a complex, time-varying combination of mixed-frequency and irregular data releases. Three of the inputs we use are monthly: the two Google Trends data series and the claimant count variable. The experimental labour market statistics are weekly. Publication dates are irregular with respect to each other; because the number of weeks per month is not an integer, weekly data publication dates move around relative to the end of the month. Publication dates of both prediction variables and target also move around relative to the reference date.

There are also periods when official publications were severely delayed, either by problems at the statistical agency or by external factors such as COVID-19. For example, for June 2020, the official unemployment statistic was published 168 days after the reference date. We include these periods with severe delay in our analysis to demonstrate how well our model copes with unusual circumstances.

We have historical information on reference and publication dates for official releases from 2016 onward. Before 2016, we impute the publication date to be the first Tuesday at least 45 days after the reference date. This is a very accurate rule of thumb for actual publication dates as 45 days is the modal gap for publications from 2016. Since we are using these imputed pre-2016 publication dates simply to demonstrate the performance of signature methods in a hypothetical nowcasting experiment, these imputations do not have any important impact on our results.

Each nowcast period is defined by a window that begins 30 days prior to the reference date and ends 90 days after it. Within each nowcast period, we produce a new nowcast whenever new information is released. Because of the data irregularities noted earlier, each nowcast period sees data released at slightly different times relative to the reference date. Figure~\ref{fig:lfsinfoflow} visualises all releases in the data (2008--2023), and how the date of the publication falls relative to the reference date, using a histogram for each information type. The figure shows that, relative to the reference date, information arrives at inconsistent times. Only those releases that are actually used are shown; this is why the there are few input series shown after 55 days, when, usually but not always, the unemployment rate has been published.

\begin{figure}[h!]
    \centering
    \caption{Each panel shows a histogram of publication dates relative to reference dates with frequencies counted across all nowcasting periods between 2008 and 2023. Only those variables (first four rows) used to inform estimates pre-publication (last row) are shown. Publication dates for both predictor variables and the target variable vary considerably relative to the reference date, except for Google Trends, which is more consistent. In particular, publication dates for experimental weekly data are not in phase with the official publication.  See Figure \ref{fig:lfsexample} for an example of the timing of information releases in a specific month.}
     \includegraphics[width=\textwidth]{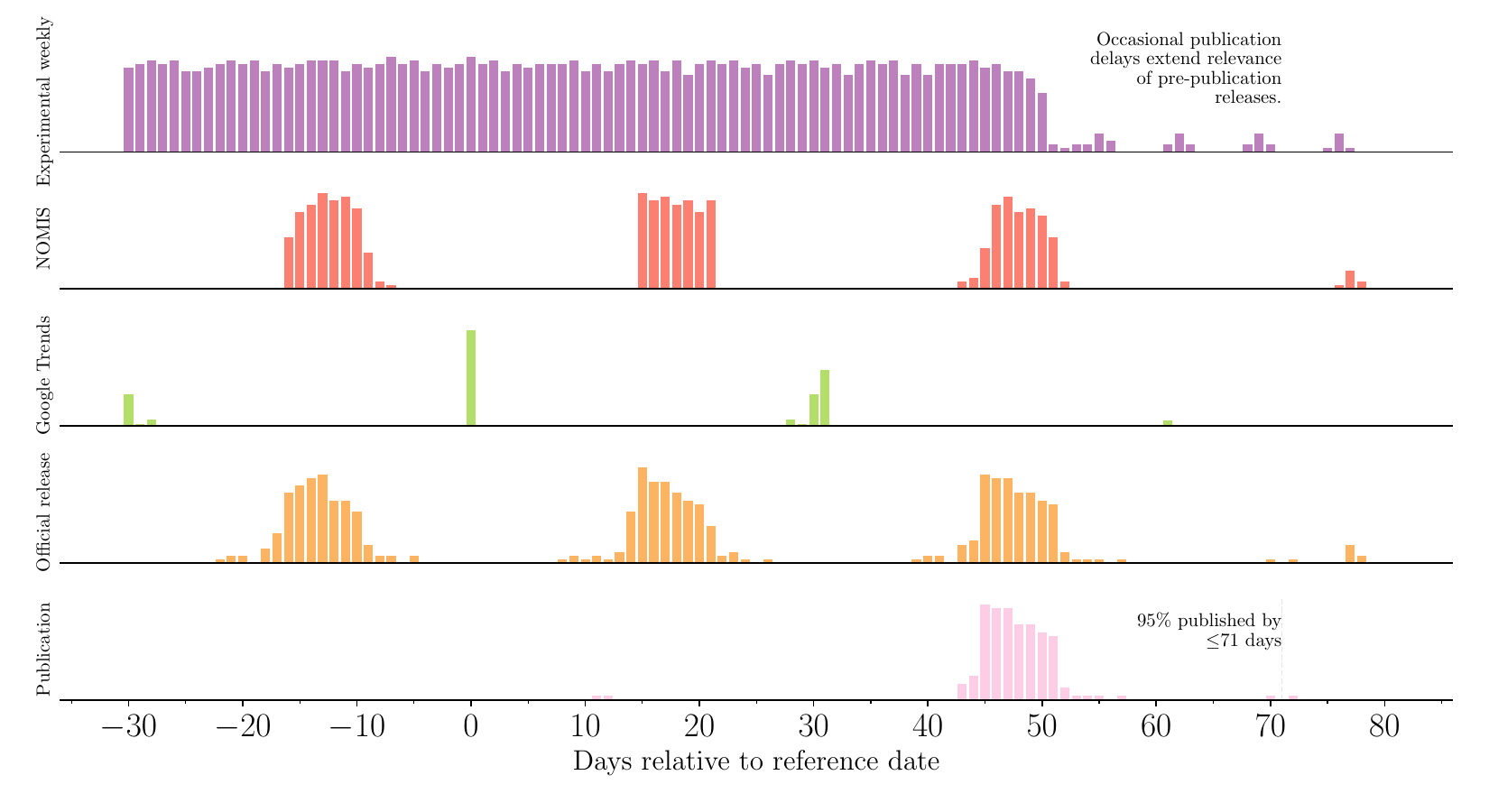}
     \label{fig:lfsinfoflow}
 \end{figure}

\subsection{Models}

We use the regression on signatures nowcasting model outlined in Section~\ref{sec:sigregression}, with features and data as described above. We use PCA for dimensional reduction, Elastic Net for regularisation, and we include an intercept. We only keep linear signature terms.

One of our baseline models is an auto-regressive model of order $1$, AR(1). Although simple, on average and across a wide range of situations, AR(1) models are tough to beat \citep{carriero2019comprehensive}.

As the AR(1) only uses lags of the target variable, we also compare the performance of the signature approach to the predictions of a Mixed Data Sampling (MIDAS) model as described in Section~\ref{sec:nowcasting_challenges}. We use the L-BFGS-B algorithm to solve the MIDAS model. As with the signature regression model and AR(1), we use a single lag of the target variable in the MIDAS model, as well as $4$ lags of each high-frequency (weekly) predictor (4 being roughly the number of weeks in a month).

For all models, we use the period from 2008 to 2020 as both the training and validation period, and 2020 to 2023 as the test period.

\subsection{Results}

After a hyperparameter search, the selected configuration has $\alpha=0$, meaning no regularisation and, effectively, OLS -- this is not unexpected given the small number of features. A truncation level of $3$ is selected and no standardisation of the features is used. The full details of the selected hyperparameters can be found in Appendix~\ref{app:lfshyper}.

\begin{figure}[h!]
    \centering
    \caption{An example nowcast of UK unemployment for the reference date of November 2019. The published value (at the publication date) is shown by a red cross.}     \includegraphics[width=\textwidth]{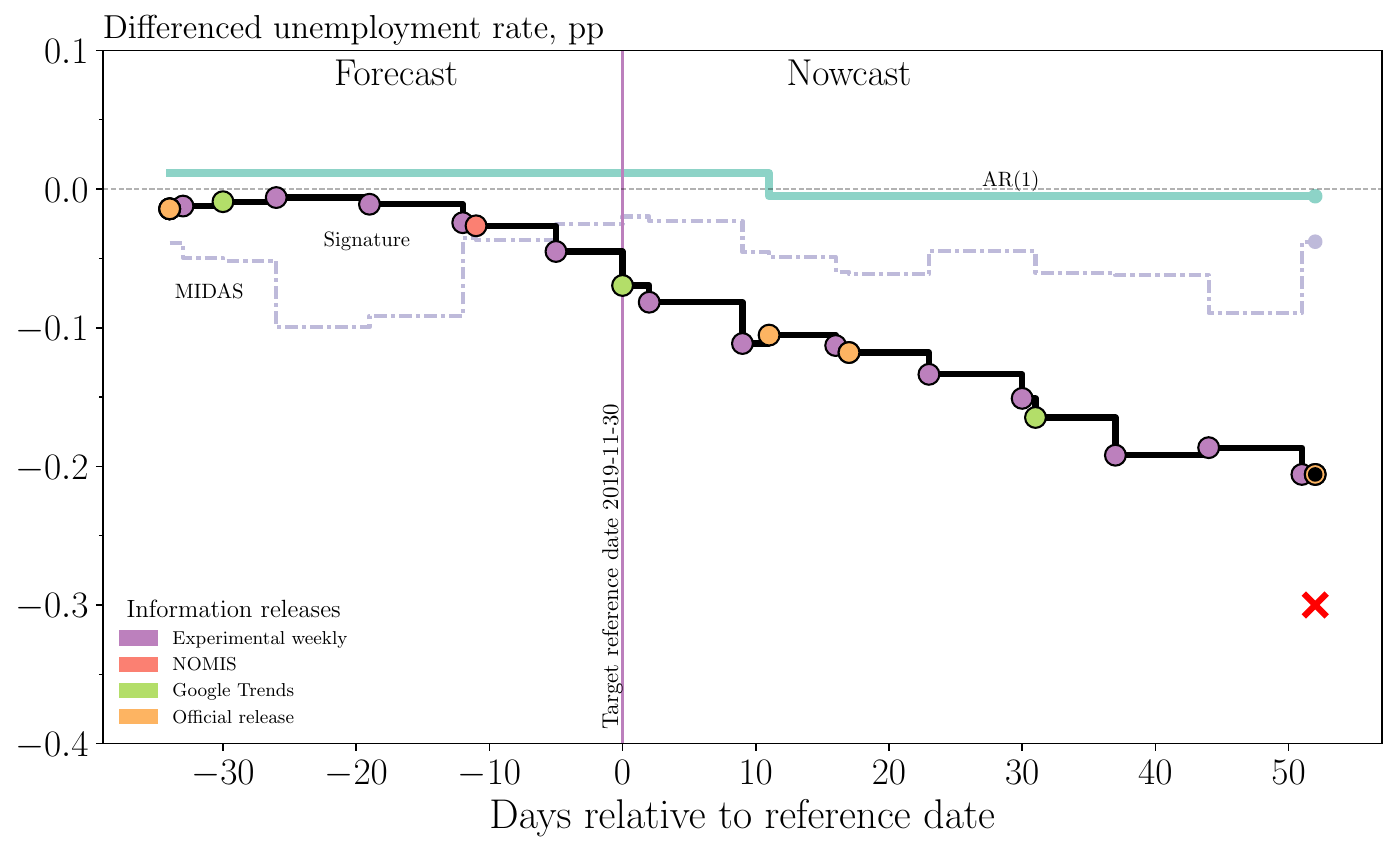}

    \label{fig:lfsexample}
 \end{figure}

\begin{figure}[h!]
    \centering
    \caption{An example nowcast of UK unemployment for the reference date of October 2021, a month after the end of the UK government's furlough scheme. The published value (at the publication date) is shown by a red cross.}     \includegraphics[width=\textwidth]{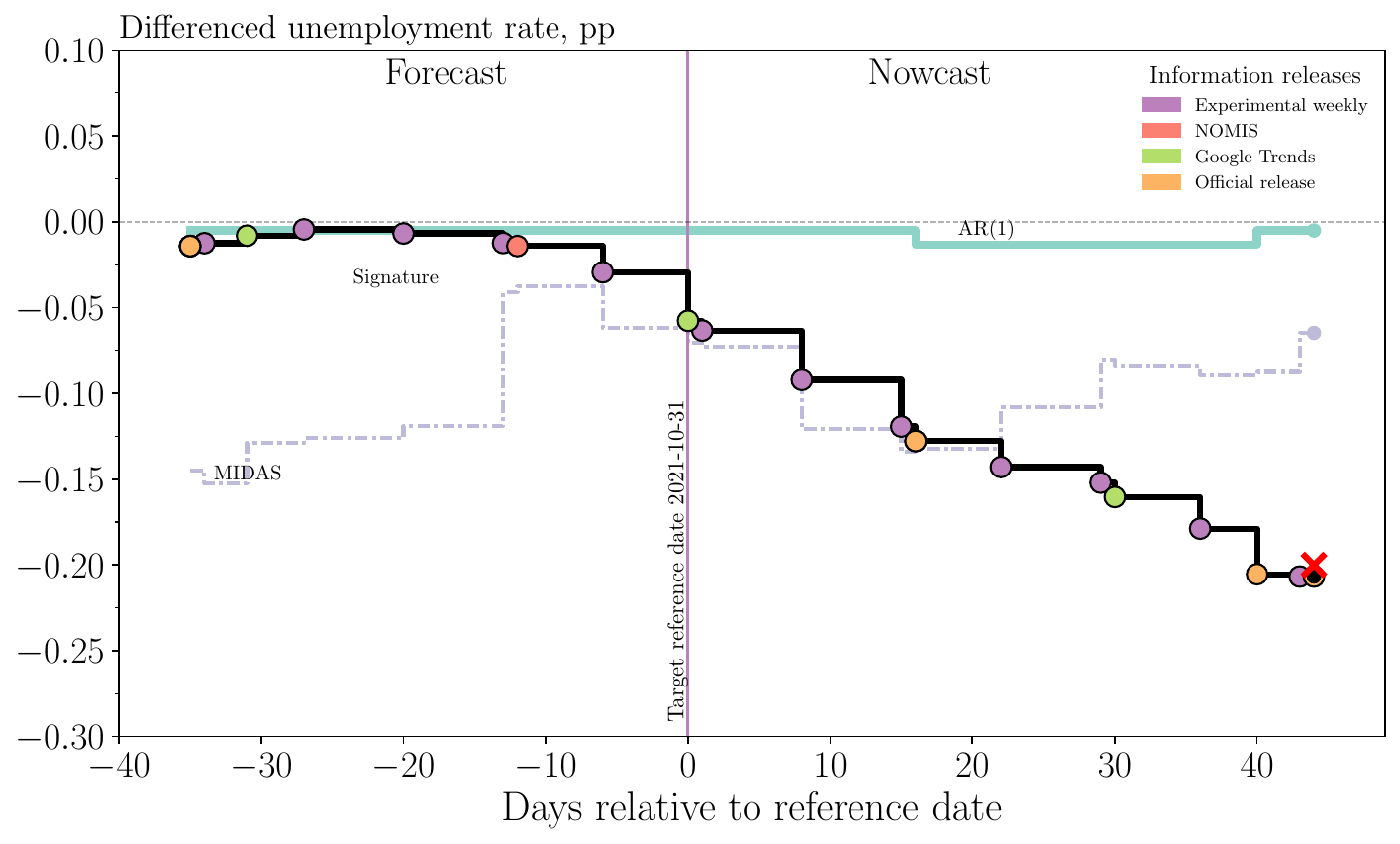}

    \label{fig:lfsexamplecovid}
 \end{figure}

Figure~\ref{fig:lfsexample} shows an example nowcast for the reference date of November 2019. This month saw a significant movement in the monthly unemployment rate, which only the signature model picked up (and even then, undershot.) 

Figure~\ref{fig:lfsexamplecovid} shows a nowcast for October 2021. This was a particularly challenging period to nowcast as it was the tail end of the UK's COVID-19 pandemic and the furlough scheme (also known as the Coronavirus Job Retention Scheme) had ended the month before.

In both figures, and in general, the AR(1) updates its prediction only as the previous months' target data become available. The signature method reassesses its prediction with each new information release. Although MIDAS model also updates with the same frequency and uses the same inputs as the signature model, MIDAS fares only slightly better than the AR(1) in these examples.

In Figure~\ref{fig:lfsrmseavg}, we show just how typical it is that the signature outperforms both the  AR(1) and the MIDAS model. Figure~\ref{fig:lfsrmseavg} shows the root mean squared error (RMSE) for all three models averaged over the test period. The signature method consistently outperforms the other two models over relevant publication timeframes, and overall: the average of the RMSEs over the reference periods was $0.169$, $0.160$, and $0.146$ for the AR(1), MIDAS, and signature method respectively.

The variability in the timing of data releases (as shown in Figure~\ref{fig:lfsinfoflow}) is the main reason for the volatility in the average RMSE graph. Note that the publication date moves around relative to the reference date too: it is usually, but not always, on the first Tuesday 45 days after the reference date, with a median publication lag of 48 days. In some cases, the publication was severely delayed, coming out well beyond the usual 48 days. Overall 95\% of publication lags are shorter than 72 days, as shown on Figure~\ref{fig:lfsrmseavg}. As far fewer publications occur at lags of 50 days or more, the chart beyond that point is based on averaging over fewer nowcast predictions, leading to a more noisy measure of accuracy.

\begin{figure}[h!]
    \centering
    \caption{The average root mean square error of an AR(1) versus a MIDAS model versus the signature method as applied to nowcasting UK unemployment.}     \includegraphics[width=\textwidth]{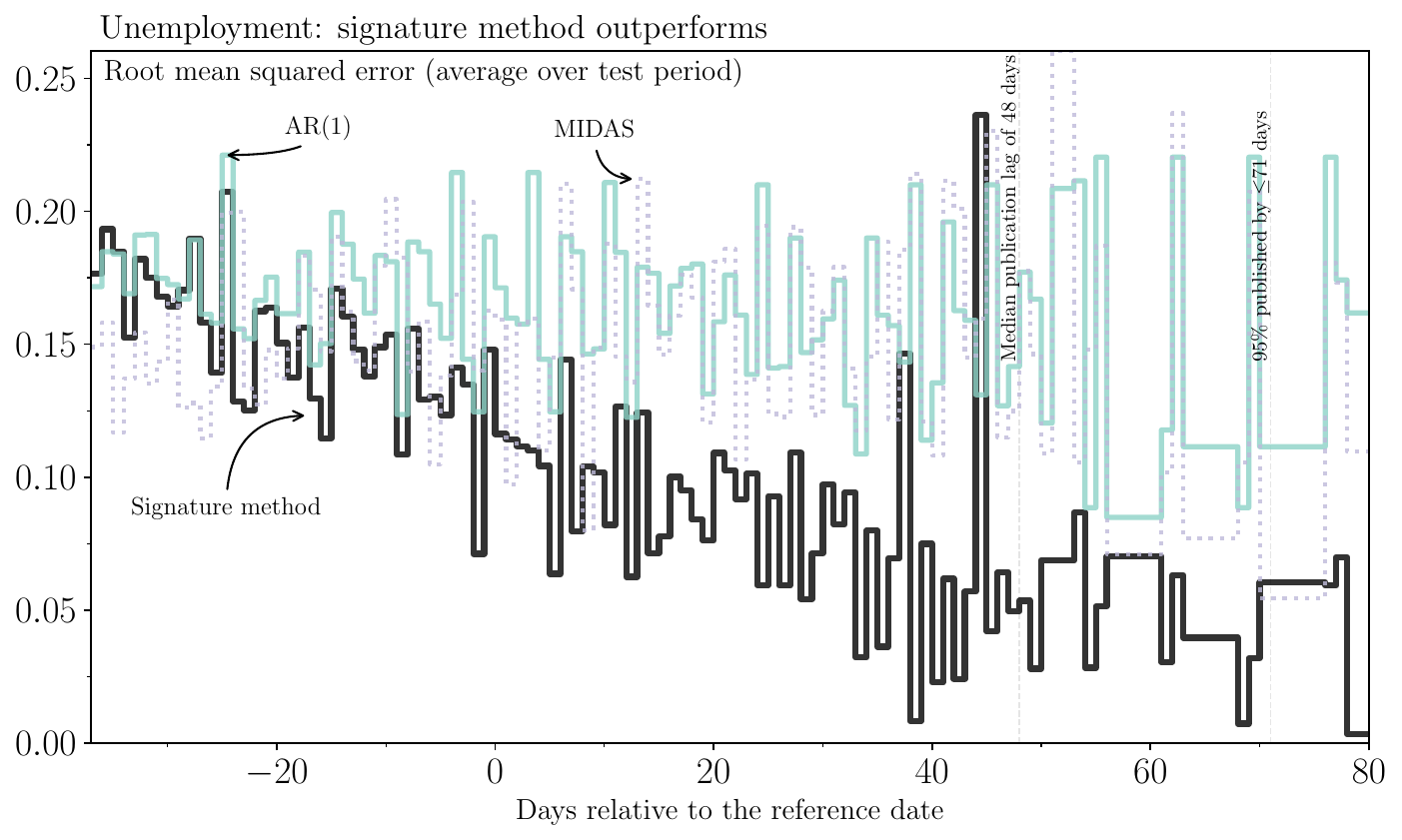}
     \label{fig:lfsrmseavg}
 \end{figure}

There are important conclusions to draw from this exercise. The first is that the signature method is at home with truly difficult time series problems -- here, we tackle irregular series, delayed target variable publication, and mixed frequencies. Second, we see that the signature method outperforms the baselines at almost any point during the tracking period. Third, the signature method still performs well in difficult times -- even during COVID-19, when employment and economic activity were highly disrupted and the swings in unemployment rate were considerable, the signature method provides a stronger nowcast than the other approaches. This matters enormously for policymaking because it is precisely during disrupted times that decisions really count.

\section{Conclusion} \label{sec:conclusions}

The path signature captures geometric properties of sequential data that make for excellent features to use in nowcasting. Signature methods allow for missing data from mixed frequencies or irregular sampling -- issues often encountered in nowcasting -- by embedding observed data in continuous time. They can also capture non-linear effects and replace the time series modellers' need to think deeply about data transformations. All of these properties mean signature methods are well-placed to deal with difficult time series problems in general, and nowcasting in particular.

In this paper we demonstrated the application of regression on signatures to nowcasting via what we call the signature method. We introduced the theory of path signatures and illustrated how they are computed, and we showed that regression in the signature space subsumes the linear Kalman filter, which is commonly used in the nowcasting literature. We also validated this through simulations, demonstrating that it is possible to almost replicate the optimal performance of the Kalman filter (when parameters are known) by using regression on signatures (fitting parameters based on data) in simulated experiments, including with irregular data.

We also applied the signature method to some relevant real-world case studies. The first is a well-known empirical exercise: nowcasting US GDP growth. We have shown that incorporating signature features from principal components can offer improvements compared with a dynamic factor model based on \cite{bok2018macroeconomic}. The best results are achieved when we compute signatures on top of the fitted dynamic factors from the baseline DFM, illustrating the value of computing signatures as features even in the presence of informative data series.

The second case study was nowcasting UK unemployment with mixed frequencies and irregularly available data series. We showed that the signature method performs extremely well relative to a benchmark nowcast, that it can give an early indication of the sign of the change in the outturn, and that the performance of the signature holds up even in difficult circumstances -- in this case, during the COVID-19 pandemic. Such difficult circumstances are when informing policymakers matters most and therefore we believe the signature method to be an important and useful contribution to nowcasting.

\clearpage

\appendix
non\section{Model Framework}\label{sec:framework}

We set out the proposed model framework in two algorithms.

Algorithm \ref{alg:sigreg_sub} gives the procedure to fit a single signature regression model on a set of clean data. The full signature regression method pipeline is presented in Algorithm \ref{alg:sigreg}. Note that while this uses hyper-parameter optimisation as part of the pipeline with grid search, other hyper-parameter search strategies could easily be incorporated.

The full list of configuration settings, and a brief explanation of them, may be found in Appendix \ref{app:configs}.

\begin{algorithm}[h!]
\SetAlgoLined
\textit{Input}:
\begin{itemize}
    \item cleaned data available (observation \& target) in a wide table, pivot format;
    \item target variable name;
    \item data parameters, e.g. publication lag in the target variable;
    \item model parameters associated with signature regression, e.g. the truncation level, lookback window length, whether to use the previous known value of the target etc.; these may be given as lists to perform a hyper-parameter search over.
\end{itemize}

\textit{Process}:
\begin{enumerate}
    \item Define the auxiliary dataframe of available observations.
    \item Convert the path information of the observations into truncated signatures over each lookback window.
    \item Split signatures into training data and test data. The training data can be a proportion of the data available, or it can be the maximum data available (all timepoints for which the target is available).
    \item Fit a (regularised) regression model.
\end{enumerate}
\textit{Return}: fitted model

\caption{Fit signature regression model}\label{alg:sigreg_sub}
\end{algorithm}

\begin{algorithm}[h!]
\SetAlgoLined
\textit{Input}:
\begin{itemize}
    \item cleaned data available (observation \& target) in a wide table, pivot format;
    \item target variable name;
    \item data parameters, e.g. publication lag in the target variable;
    \item model parameters associated with signature regression, e.g. the truncation level, lookback window length, whether to use the previous known value of the target etc.; these may be given as lists to perform a hyper-parameter search over.
\end{itemize}

Impute missing values in the dataset by specified method (defaults to forward fill for all values except those at the beginning, which are filled by backward fill).

\For {each set of hyperparameters/configurations}{
    \For{each time over the hyperparameter optimisation period (validation set)}{
    \begin{enumerate}
        \item Fit a signature regression model as outlined in Algorithm \ref{alg:sigreg_sub}.
        \item Evaluate the hidden data set with the regression model.
        \item Store predictions and errors.
    \end{enumerate}
    }
}
Identify the best set of hyperparameters.

\uIf{Recursive nowcasts}{

\For{each time that a nowcast is required}{
    \begin{enumerate}
    \item Fit a signature regression model as outlined in Algorithm~\ref{alg:sigreg_sub}.
    \item Evaluate the test set (only the time for the current nowcast) with the regression model for a nowcast value.
    \end{enumerate}
}    
}
\Else{
    \begin{enumerate}
    \item Combine the train and validation set and fit a signature regression model as outlined in Algorithm~\ref{alg:sigreg_sub}.
    \item Evaluate the test set with the above trained regression model.
    \end{enumerate}
}

\textit{Return} nowcast values.    

\caption{Signature regression framework }\label{alg:sigreg}
\end{algorithm}

\section{Further simulation results}\label{app:simulation}

Here, we provide similar charts to Figures~\ref{fig:sim_residuals_composite} and \ref{fig:traj} for the three other simulation cases: irregularly sampled data, data generated by a non-linear process, and irregularly sampled data generated by a non-linear process. The former chart type compares the residuals from a signature regression that has learned parameters from the data versus a Kalman filter that knows the underlying parameter values. For the means and variances of residuals, see Table~\ref{tab:simsummary}. The latter plot type gives the (transformed) observed values along with the true, hidden values that need to be inferred, as well as the predicted trajectories from the Kalman filter and the signature method.

\subsection{Irregularly sampled data}

Figure~\ref{fig:diagnostic_res_irregular} shows how the residuals from the signature method compare to those from the Kalman filter for the case of irregular sampling. 

The line of best fit shown in Figure~\ref{fig:diagnostic_res_irregular} has gradient $0.93$. The equivalent irregularly sampled inference of paths as in Figure~\ref{fig:traj} is shown in Figure~\ref{fig:traj2}: the signature method is able to provide a good approximation to the Kalman filter.

\begin{figure}[h!]    \centering
    \caption{Residuals of the Kalman filter vs the signature method on irregularly sampled data. Marginal plots are histograms of residuals. The red line gives the line-of-best-fit.}    \includegraphics[width=0.75\linewidth]{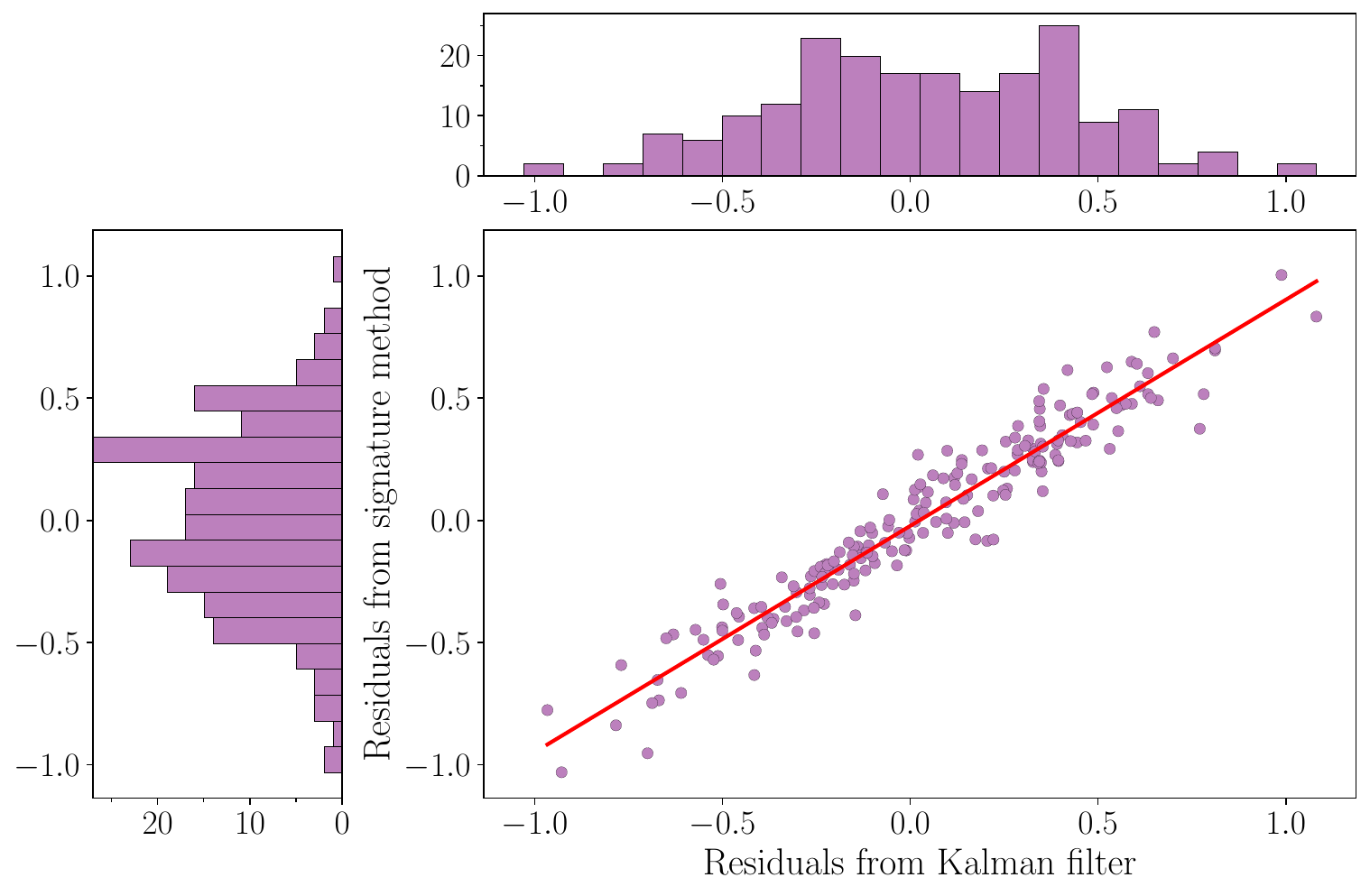}
    \label{fig:diagnostic_res_irregular}
\end{figure}

\begin{figure}[h!]
    \caption{An \textit{irregularly} sampled simulated path with observed values (\(X_t\)) along with the true, hidden values that need to be inferred (\(Y_t\)). }
    \centering
    \includegraphics[width=0.75\textwidth]{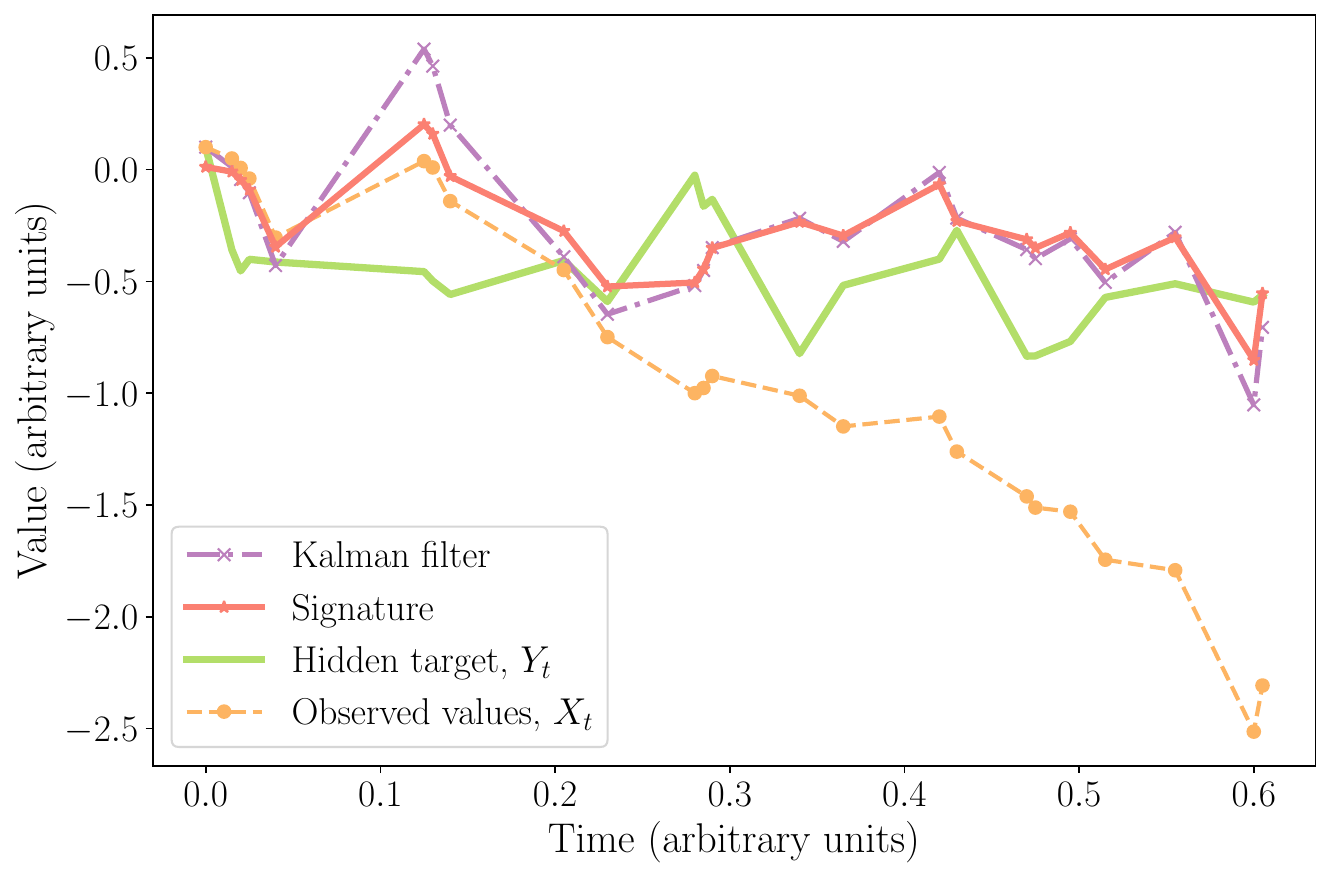}
    \label{fig:traj2}
\end{figure}

The simulation demonstrates that regression on signatures is a competitive method to infer parameters even when data are not regularly sampled. No additional modifications need be made to the signature method to achieve this.

\subsection{Non-linear data generating process}

We use a sigmoid transform on the same regularly sampled trajectory data as in Figures~\ref{fig:traj} and \ref{fig:traj2}. 
This transform allows the (theoretically optimal) Kalman filter to be derived in this non-linear context, for the sake of comparison.
The residual comparison is in Figure~\ref{fig:diagnostic_res_sigmoid}, while the inferred paths are in Figure~\ref{fig:traj_sigmoid}.

\begin{figure}[h!]    \centering
    \caption{Residuals of the Kalman filter vs the signature method on data generated by a non-linear process. Marginal plots are histograms of residuals. The red line gives the line-of-best-fit.}    \includegraphics[width=0.75\linewidth]{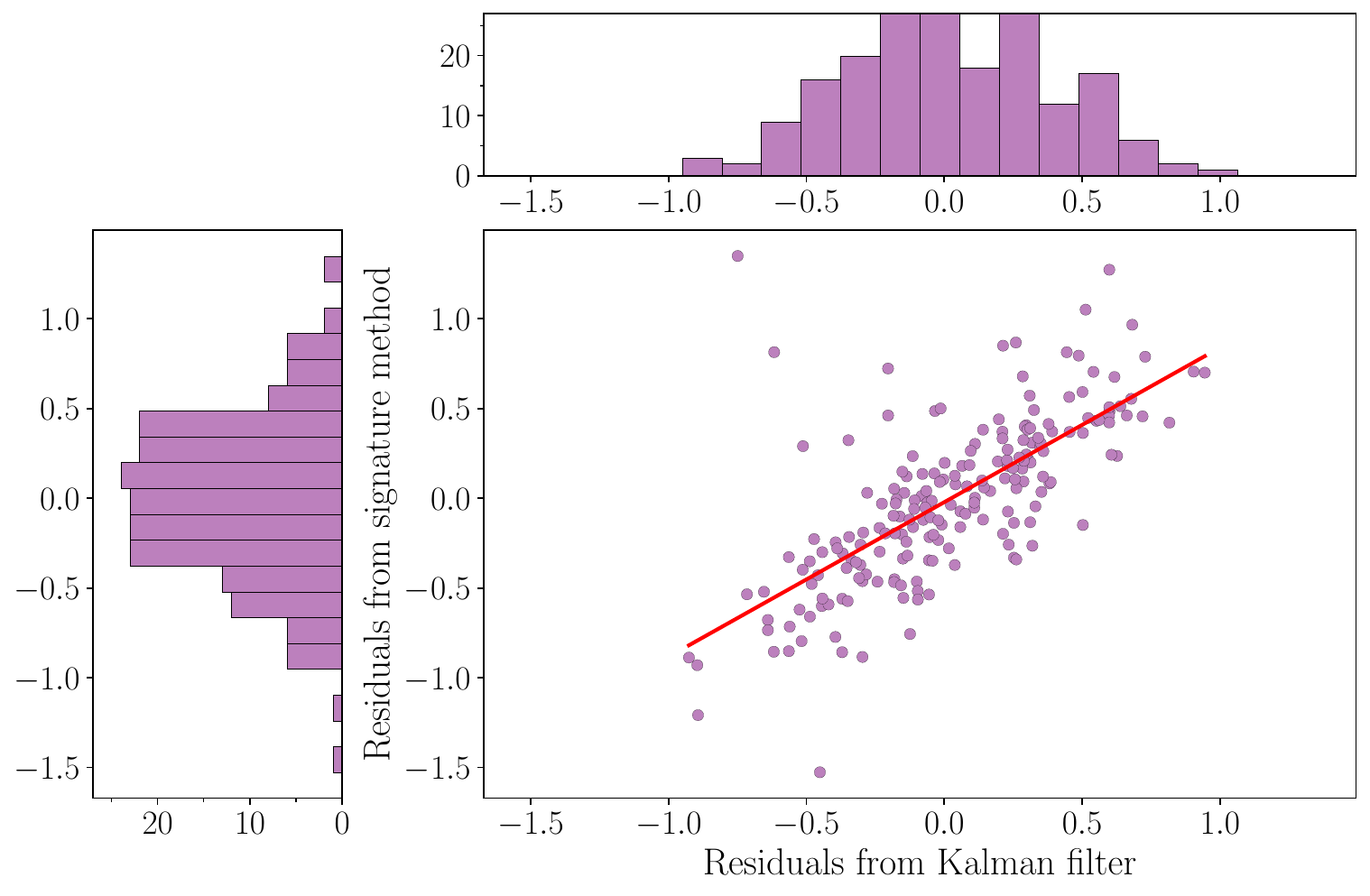}
    \label{fig:diagnostic_res_sigmoid}
\end{figure}

\begin{figure}[h!]
    \caption{Illustration of the same simulated path as in Figure~\ref{fig:traj} but now the observed data have been passed through a sigmoid transform.}
    \centering
    \includegraphics[width=0.9\textwidth]{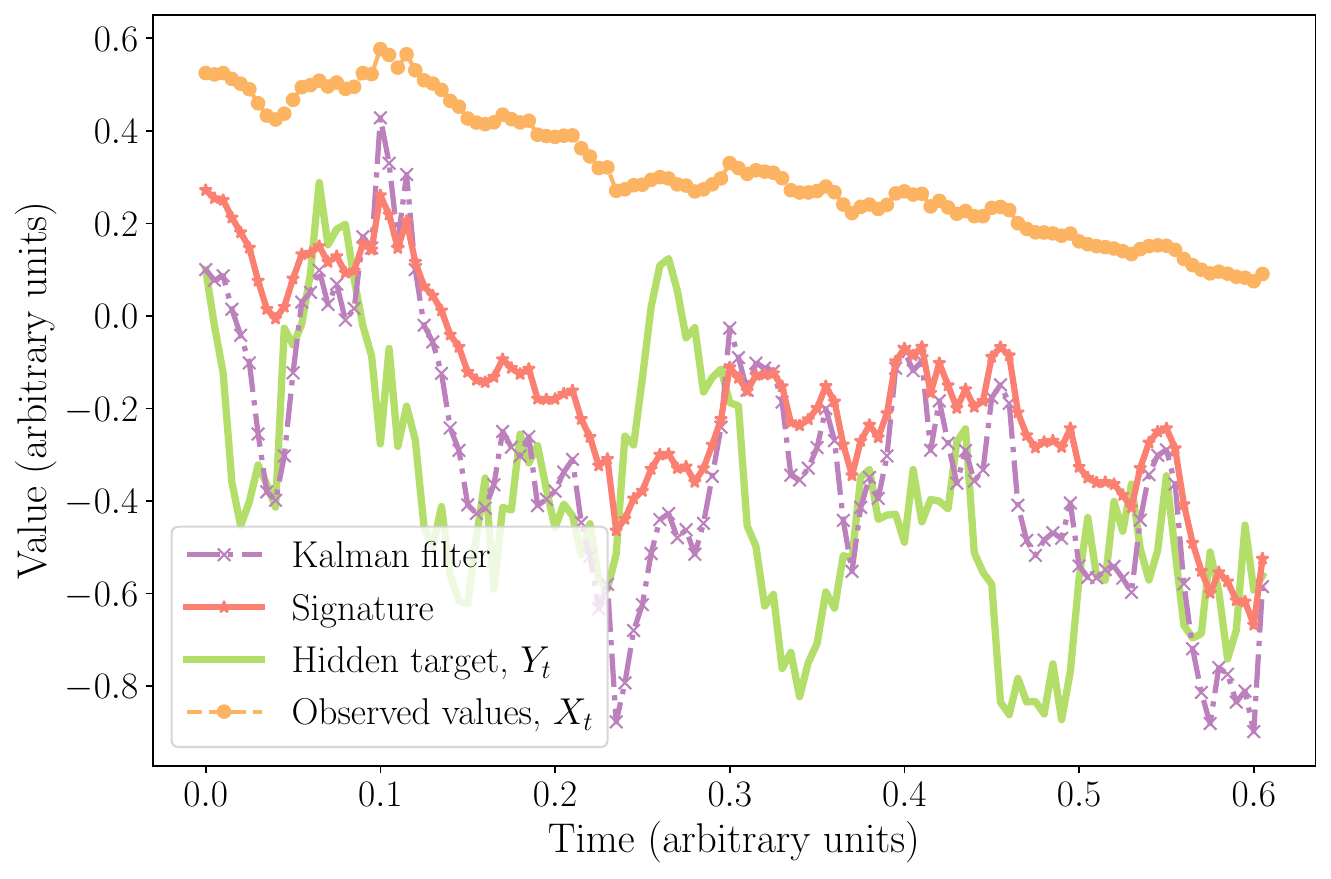}
    \caption*{\small{\textit{Notes.} The plot gives the transformed observed values along with the true, hidden values that need to be inferred, as well as the predicted trajectories from the Kalman filter and the signature method.}}
    \label{fig:traj_sigmoid}
\end{figure}

\subsection{Non-linear with irregularly sampled data}

Figure~\ref{fig:diagnostic_res_sigmoid_down} shows the comparison of the residuals, which is much weaker in this case. In Figure~\ref{fig:traj_sigmoid2}, we plot the results on the same simulated path as in Figure~\ref{fig:traj_sigmoid} but with observations that have been irregularly sampled. Despite the weaker correlation between the signature method residuals and the Kalman filter residuals, the predictions are not dissimilar.

\begin{figure}[h!]    \centering
    \caption{Residuals of the Kalman filter vs the signature method on data generated by a non-linear process and downsampled. Marginal plots are histograms of residuals. The red line gives the line-of-best-fit.}
    \includegraphics[width=0.75\linewidth]{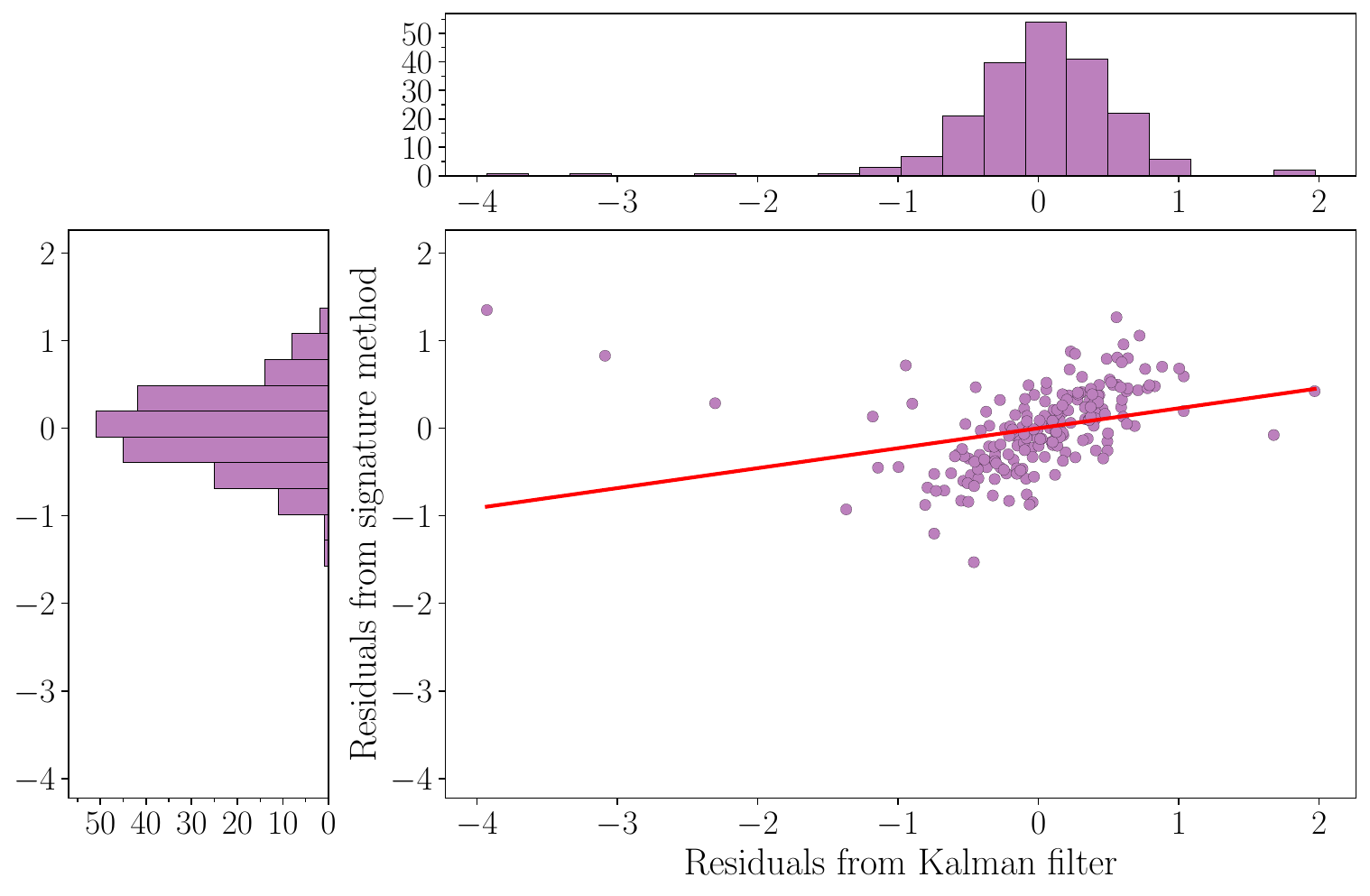}
\label{fig:diagnostic_res_sigmoid_down}
\end{figure}

\begin{figure}[h!]
    \caption{Illustration of the same simulated path as in Figure~\ref{fig:traj_sigmoid} (that is, after a sigmoid transform) but now irregularly sampled. }
    \centering
    \includegraphics[width=0.9\textwidth]{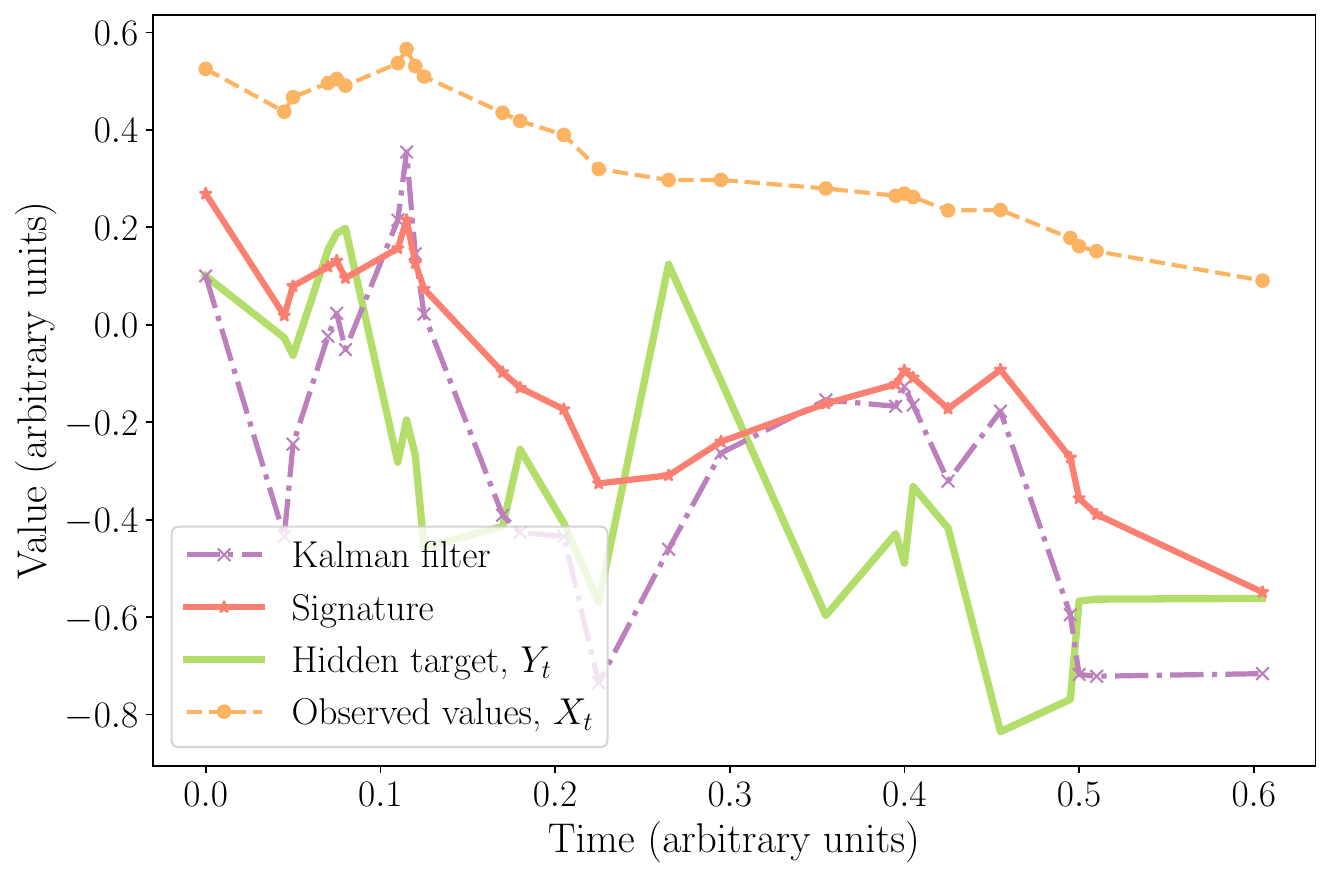}
    \label{fig:traj_sigmoid2}
\end{figure}

\section{Configurations for nowcasting pipeline}\label{app:configs}

A short description of the main configuration settings for our signature nowcast can be found in Table \ref{tab:config_desc}.

\begin{table}[]
    \centering
    \caption{Description of the main configs used in the nowcasting pipeline.}
    
    \begin{tabular}{c c m{18em}}
    
        Parameter & Data Type & Description \\
        \hline
        \hline
        window\_type & str & The type of window used for generating the signatures - possible values are `days', `ind' and None. Arguments `days' and `ind' behave as moving windows. None acts as a expanding window. \\
        \hline
        max\_length & int & When `window\_type' is set to `days' or `ind', `max\_length' is the number of days or rows to index by.\\
        \hline
        fill\_method & str & How missing data should be filled - typical values `ffill', `bfill' or `rectilinear'. \\
        \hline
        level & int & Truncation level for the variables.\\
        \hline
        t\_level & int & Truncation level for the time parameter.  \\
        \hline
        basepoint & bool & Whether to add a basepoint of zeros before computing signatures to remove lateral invariance.\\
        \hline
        use\_multiplier & bool & Whether to multiply only the signatures corresponding to time only by multiplier based on target series. \\
        \hline
        keep\_sigs & str & The type of signature terms to keep - values `all', `innermost', `all\_linear' and None.\\
        \hline
        regularize & str & Method to regularize, accepted values are `elastic\_net', `l1', `l2', and None. Default: `elastic\_net'.\\
        \hline
        alpha & float & Determines the strength of regularisation, see \href{https://scikit-learn.org/stable/modules/generated/sklearn.linear_model.ElasticNet.html}{\texttt{sklearn}}.\\
        \hline
        l1\_ratio & float & Weight for the L1 penalty in the ElasticNet of \href{https://scikit-learn.org/stable/modules/generated/sklearn.linear_model.ElasticNet.html}{\texttt{sklearn}}.\\
        \hline
        fit\_intercept & bool & Whether to fit an intercept in regression.\\
        \hline
        target\_lag & int & The target variable will have a lag due to the release timetable, this variable creates a shift so models never utilise future information.\\
        \hline
        use\_prev\_value & bool & Whether to use the latest value of the target series for the regression. \\
        \hline
        standardize & bool & Standarize signature terms before regression. Default: True. \\
        \hline
        training\_proportion & float & This is an optional parameter to specify a fixed proportion for the train set.\\
        \hline
        reduce\_dim & bool & Whether to reduce the dimension of the observed variables with principal component analysis.\\
        \hline
        k & int & The number of principal components to use.\\
        
    \end{tabular}
    \label{tab:config_desc}
\end{table}

In addition, the pipeline includes the possibilities to use custom factor structures to find principal components and much more. See the implementation of the experiments in this paper from our \href{https://github.com/alan-turing-institute/Nowcasting_with_signatures}{Github repository} or refer to the documentation of our package \href{https://github.com/datasciencecampus/SigNow_ONS_Turing}{SigNow} for more details.

\section{Factor structure for the US GDP DFM}\label{sec:usgdp_factors}

As in \citet{bok2018macroeconomic}, we assume that there are 4 factors labelled as soft, real, labour, and global which affect the time series behind the US GDP nowcasting application. In this appendix, we list the groupings. The global factor includes all variables (i.e. it contains factors in soft, real, and labour as well). The final list below, ``other'', lists all the variables which do not fall under soft, real, and labour, and so are affected by the global factor alone.

\vspace{1em}

\noindent \textbf{Soft} = \{Empire state Mfg. Survey: General business conditions, Phily Fed Mfg. business outlook: Current activity, University of Michigan: Consumer Sentiment\}

\vspace{1em}

\noindent \textbf{Real} = \{Real domestic gross product, Manufacturers new orders: Durable goods,  Retail sales and food services, New single family houses sold, Housing starts, Industrial production index, Merchant wholesalers: Inventories: Total, Value of construction put in place, Building permits, Capacity utilization, Inventories: Total business, Real personal consumption expenditures, Manufacturing shipments: Durable goods, Mfrs. unfilled orders: All manufacturing industries, Manufacturing inventories: Durable goods, Real gross domestic income, Real disposable personal income, Exports: Goods and services, Imports: Goods and services\}

\vspace{1em}

\noindent \textbf{Labour} = \{All employees: Total nonfarm, Civilian Unemployment rate, ADP nonfarm private payroll employment, Nonfarm business sector: Unit labor cost, JOLTS: Job openings: Total\}

\vspace{1em}

\noindent \textbf{Other} = \{CPI-U: All items, PPI: Final demand, Import price index, PCE less food and energy: Chain price index, CPI-U: All items less food and energy, PCE: Chain price index, Export price index\}

\section{Experimental Details}

\subsection{US GDP}\label{sec:hyperparams_us}

For the US GDP growth example, there are two models, depending on whether the data used are the principal components of each factor group, or whether they are the final hidden factors from the dynamic factor model.

The target variable has a 30 day publication lag after the end of the quarter. This corresponds to a lag of about 124 days from the start of the reference quarter. All signature terms are standardised as discussed in Section \ref{sec:regularisation}.

For the data obtained using PCA, the final fitted model used all the linear signature terms truncated at level 3 over a lookback window of 730 days. This is a ridge model with a regularisation parameter strength of 2.0. We use the previous value of the target in the regression and allow an intercept. We use the target series as a multiplier as seen in equation \eqref{eqn:regressionsig} but we do not append a basepoint to the start of the series (this is where a value, often 0, is appended to remove the lateral invariance) when computing the signatures. Missing data is filled by forward fill (carry the last value forward until another observation is made). A summary of these hyperparameters can be found in Table \ref{tab:params_gdp_pca}.

\begin{table}[hp]
    \centering
    \caption{Table of hyperparameters used to obtain the US GDP results with PCA ``factors''.}
    \begin{tabular}{c c}
         Parameter & Value \\
         \hline
         \hline
        level & 3 \\
        t\_level & 3 \\
        fill\_method & ffill \\
        max\_length & 730 \\
        keep\_sigs & all\_linear \\
        regularize & elasticnet \\
        alpha & 2.0 \\
        l1\_ratio & 0 \\
        standardize & True \\
        fit\_intercept & True \\
        use\_multiplier &  True \\
        basepoint &  False \\
        target\_lag & 124 \\
        use\_prev\_value & True
        
    \end{tabular}
    \label{tab:params_gdp_pca}
\end{table}

For the data obtained using DFM means as observations, the final fitted model again used all the linear signature terms truncated at level 3 over a lookback window of 730 days. This is a ridge model with a regularisation parameter strength of 1.0. We use the previous value of the target in the regression and allow an intercept. We use the target series as a multiplier but we do not use a basepoint when computing the signatures. Missing data is filled by rectilinear interpolation. A summary of these hyperparameters can be found in Table \ref{tab:params_gdp_dfm}.

\begin{table}[hp]
    \centering
    \caption{Table of hyperparameters used to obtain the US GDP results with filtered DFM factors.}
    \begin{tabular}{c c}
         Parameter & Value \\
         \hline
         \hline
        level & 3 \\
        t\_level & 3 \\
        fill\_method & rectilinear \\
        max\_length & 730 \\
        keep\_sigs & all\_linear \\
        regularize & elasticnet \\
        alpha & 1.0 \\
        l1\_ratio & 0 \\
        standardize & True \\
        fit\_intercept & True \\
        use\_multiplier &  True \\
        basepoint &  False \\
        target\_lag & 124 \\
        use\_prev\_value & True
        
    \end{tabular}
    \label{tab:params_gdp_dfm}
\end{table}

\subsection{Nowcasting UK unemployment hyperparameters}\label{app:lfshyper}

The list of hyperparameters selected for nowcasting the UK unemployment can be found in Table \ref{tab:params_lfs}.

\begin{table}[h]
    \caption{Table of hyperparameters used for nowcasting UK unemployment -- see Section~\ref{sec:lfsapplication}. Items in bold have been selected by tuning; otherwise we have chosen their values.}
    \centering
    \begin{tabular}{c c}
         Parameter & Value \\
         \hline
         \hline
        level & \textbf{3} \\
        t\_level & \textbf{3} \\
        fill\_method & rectilinear \\
        max\_length & 40 days \\
        keep\_sigs & all\_linear \\
        regularize & elasticnet \\
        alpha & \textbf{0} \\
        l1\_ratio & \textbf{0} \\
        standardize & \textbf{False} \\
        fit\_intercept & True \\
        use\_multiplier &  False \\
        basepoint & \textbf{True} \\
        target\_lag & 8 \\
        use\_prev\_value & True
        
    \end{tabular}

    \label{tab:params_lfs}
\end{table}

\section{Consistency of signature regression}\label{app:consistency}

The consistency of the signature regression approach is somewhat delicate, as it naturally involves a high dimensional and possibly ill-posed estimation problem. In this appendix, we will outline what can easily be said about the statistical properties of our method, and give references to where these questions are pursued further in the literature. 

Our approach will involve decomposing the errors of predictions into different sources, and showing that each of these can be made small, under appropriate assumptions. We will not give a formal statement of consistency (as this would involve heavy assumptions), but will indicate what would be involved in establishing such a result.

We will require some additional notation in order to perform these decompositions. We recall that $\mathcal{F}^X_t$ is the $\sigma$-algebra describing the information available at time $t$ from observations of $X$ up to time $t$. We write $\mathcal{F}^X_{[t-\tau, t]} = \sigma(\{X_s\})_{s\in [t-\tau, t]}$ for the $\sigma$-algebra of information available from observations of $X$ between times $t-\tau$ and $t$, and $\mathcal{S}^k_{[t-\tau, t]}(X)$ for the $k$th level signature of the augmented process $(t,X_t)$ between times $[t-\tau, t]$. We write $L^*\big(Y_{t-},\mathcal{S}^k_{[t-\tau, t]}(X)\big)$ for the optimal (i.e.~variance minimising) regression estimate when approximating $Y$ as a linear function of $Y_{t-}$ and $\mathcal{S}^k_{[t-\tau, t]}(X)$. Finally, we write $\hat Y_t = \hat L\big(Y_{t-},\mathcal{S}^k_{[t-\tau, t]}(X)\big)$ for the regression estimate we obtain using the method described in Section~\ref{sec:sigregression}.

We now observe that we can decompose of the mean square error of our estimate as follows (noting that by the orthogonality of conditional-expectation, the first two decompositions do not introduce cross-terms, and the factor $2$ arises from Young's inequality):
\begin{align*}
    \mathbb{E}[(Y-\hat Y)^2] &\leq\mathbb{E}\Big[(Y-\mathbb{E}[Y|\mathcal{F}_t^X, Y_{t-}])^2\Big] + \mathbb{E}\Big[(\mathbb{E}[Y|\mathcal{F}_t^X,  Y_{t-}] - \mathbb{E}[Y|\mathcal{F}^X_{[t-\tau, t]},  Y_{t-}])^2\Big]\\
    &\quad + 2 \mathbb{E}\Big[(\mathbb{E}[Y|\mathcal{F}^X_{[t-\tau, t]},  Y_{t-}] -L^*(Y_{t-},\mathcal{S}^k_{[t-\tau, t]}(X)) )^2\Big]\\
    &\quad +  2\mathbb{E}\Big[\big(L^*(Y_{t-},\mathcal{S}^k_{[t-\tau, t]}(X))  -\hat Y_t\big)^2\Big].
\end{align*} 
We will address each of these terms in turn.

\subsection{Unavoidable error} 
The quantity
\[\mathbb{E}\Big[(Y-\mathbb{E}[Y|\mathcal{F}_t^X, Y_{t-}])^2\Big]\]
is the unavoidable error of the estimation problem. This cannot be reduced using any mathematical or statistical technique, but is intrinsic to the information available. For this reason we focus on the subsequent errors.

\subsection{Horizon-induced error} 
The quantity
\[ \mathbb{E}\Big[(\mathbb{E}[Y|\mathcal{F}_t^X, Y_{t-}] - \mathbb{E}[Y|\mathcal{F}_{[t-\tau, t]}, Y_{t-}])^2\Big]\]
represents the error induced by restricting our attention to a finite horizon/lookback window. Suppose we know that $Y$ is a Markov process, and assume that $Y_{t-}$ (which is an observation of $Y$ at the most recent point prior to time $t$) comes from an observation time in the interval $[t-\tau, t]$. Then, by standard properties of Markov processes, we know that $Y$ is conditionally independent of $\mathcal{F}^X_{t-\tau}$ given $Y_{t-}$. Recalling that  $\mathcal{F}^X_t = \mathcal{F}_{t-\tau}^X \vee \mathcal{F}^X_{[t-\tau,t]}$, we conclude that this error is zero when $Y$ is a Markov process and $\tau$ is sufficiently large relative to the delay in (true) observations of $Y$.

If we do not wish to assume that $Y$ is Markov, or wish to exclude previous observations of $Y$, then this error can be controlled by studying the horizon dependence and ergodic properties of the filter estimate $\mathbb{E}[Y|\mathcal{F}_{[t-\tau, t]}]$. This is a well-studied problem, see for example \cite{vanHandel2014}, \cite{chigansky}, \cite{fausti} and references therein. However, the assumptions under which one can prove bounds on the horizon dependence are generally moderately restrictive.

\subsection{Signature-approximation error} 
The quantity
\[\mathbb{E}\Big[(\mathbb{E}[Y|\mathcal{F}^X_{[t-\tau, t]},  Y_{t-}] -L^*(Y_{t-},\mathcal{S}^k_{[t-\tau, t]}(X)) )^2\Big]\]
represents the error due to approximating the conditional expectation as a linear function of the $k$th level signature. Proving bounds on this approximation error is an active area of mathematical research. The challenge here is to ensure that the map $X \to \mathbb{E}[Y|\mathcal{F}^X_{[t-\tau, t]},  Y_{t-}]$ is continuous in an appropriate sense. If we assume that our observation process $X$ is in discrete time, and that the joint process $(X,Y)$ has some minor level of regularity (for example, it would be sufficient if the vectors $\{X_{t_i}, Y_{t_i}\}_{i=1}^N$ have a joint density), then we can ensure this continuity. Consequently, the classical universal approximation theorem (see \cite{levin2013learning}) applies, which guarantees we can approximate the conditional expectation arbitrarily well with high probability. More recent results \citep{cuchiero2025global, ceylanuniversality} show precisely that the variance of this approximation can be made arbitrarily small, by using a sufficient number of signature terms.

The assumption of linearity in $Y_{t-}$ is not a concern if the true relationship is linear, otherwise this potentially introduces another error in the model. To alleviate this error, one can simply use a richer regression family (for example, polynomial regression, splines, or other nonparametric methods). From this perspective, the assumption on linearity in $Y_{t-}$ is not intrinsic to the signature regression model.

\subsection{Regression error} 
The quantity
\[ \mathbb{E}\Big[(L^*(Y_{t-},\mathcal{S}^k_{[t-\tau, t]}(X)) -\hat Y_t)^2\Big]\]
represents the error due to the use of regression estimates based on a finite training sample. 

The error due to regression can now be treated as in a standard regression problem. Observe that we have reduced the problem to a standard finite-dimensional regression problem, for which there is a well-established literature. We outline some primary concerns below.

\paragraph{Regression error with independent observations}

If we assume that we have access to a sequence of training paths, which are independently sampled, then we are in a standard regression setting. The consistency follows from the standard properties of linear regression where, as the sample size increases, the estimated regression coefficients converge to the coefficients in the best linear approximation. Classic results in this direction can be found in \citep{drygas1971consistency, drygas1976weak, amemiya1985advanced}, and the basic result on consistency of regression estimators is taught in most undergraduate courses in statistics and econometrics.

The presence of collinearity in the signature terms does raise some concerns here. Certain terms in the signature are guaranteed to satisfy quadratic shuffle identities  (for example, the relationships $S^2(tx) + S^2(xt) = S^1(t)S^1(x)$ and $S^2(tt)+ S^2(tx) + S^2(xt) + S^2(xx) = 1/2(S^1(t)+ S^1(x))^2$ are illustrated in Figure \ref{fig:sig_illustration}), and so the regression technique needs to be modified to deal with this issue. Whether this is a practical concern depends on the interpretation of these coefficients --  as the signature is guaranteed to satisfy these  relationships, the `true' coefficients are somewhat meaningless, as the model is not fully identifiable (but the predictions from the model are unique, both in and out-of sample). As the primary concern of these techniques is the model predictions, rather than the interpretation of specific coefficients of individual signature terms, this limitation has few consequences.

In the algorithm presented above, the regression problem is perturbed away from OLS methods by including a regularization term, yielding the regularised least-squares problem, which is strictly convex and hence has a unique optimizer. This may introduce a small amount of bias in the estimation procedure, but the standard results on the ridge regression will apply (see, for example, \cite{hsu2012random}).

We also consider an alternative solution, where we include only the `linear signature' terms, i.e. where time is the only coefficient that is permitted to appear more than once in the signature. This lessens the problem of  collinearity, provided we use data with a range of time horizons. 
The drawback of this restriction is that it forces a linear relationship to hold between the hidden target variable and the observation paths, which may be inappropriate in some contexts. We see that this assumption naturally arises in Section \ref{sec:signKalman}, which is based on classical (linear) Kalman filtering.

\paragraph{Regression error with dependent observations}

As discussed above, the fact that our method reduces to classical linear regression enables easy extension from the traditional setting where we have access to independent observation paths, and allows more natural time-series assumptions to be made instead. 

A classic extension is to assume that our observations are from an ergodic process, for which the use of a single long trajectory results in estimates that converge to their underlying averages. In particular, \cite{lucchese2025learning} shows that under the assumption that our observation process is strongly mixing, the signature of our observations will also be strongly mixing (curiously, they also demonstrate that simply assuming ergodicity of the observation process is not sufficient to guarantee ergodicity of its signature). They then show the consistency of the sample average signature using non-overlapping intervals (along with central limit theorems for these quantities). 

In a regression context, this implies that, under the assumption that our observations $X$ and hidden target $Y$ are (jointly) strongly mixing, the regularized regression estimator will also be consistent\footnote{The fact that \cite{lucchese2025learning} only shows consistency of the average signature is not a concern here, as Chen's identity shows that products of signature terms are themselves signature terms. This allows us to expand the regression coefficient in terms of the joint signature of $(X,Y)$ over each non-overlapping interval, and hence apply their result to see consistency.}. 

When we use overlapping intervals, the results we obtain do not fall neatly into the setting of \cite{lucchese2025learning}. However, we can consider the non-overlapping interval case as simply being the combination of a collection of correlated learning problems with non-overlapping intervals. As all these problems are consistent, we see that this does not cause significant difficulty for the theory.

\paragraph{Infill asymptotics}

A further asymptotic regime that could be considered, but is not the focus here, is the setting in which we obtain increasingly frequent observations of the same path. The consistency of the discrete-observation signatures as approximations of the continuous signature is also considered in \cite{lucchese2025learning}, where it is shown that, assuming the observation process is sufficiently regular (in particular, it can be seen as a `geometric rough path'), then the estimates are consistent, and convergence rates are given. 

In practice, this is less critical a concern for our analysis, as we do not suppose that it is generally possible to monitor the observation variables with arbitrary frequency. This aligns with the view that the observation process is not under the direct control of the econometrician, but is determined by other factors (such as availability of survey data, frequency of corporate reporting, etc...). Nevertheless, it is reassuring to know that there is indeed a consistency result in this direction, particularly when observations in high-frequency are available.

\newpage
\singlespacing
\bibliographystyle{apalike} 
\bibliography{refs}

\end{document}